\newcommand{\eps}{{\varepsilon}}
\newcommand{\rd}{\mathrm{d}}            
\newcommand{\ri}{\mathrm{i}}            
\newcommand{\hy}{{\widehat{y}}}
\newcommand{\hw}{{\widehat{w}}}
\newcommand{\hu}{{\widehat{u}}}
\newcommand{\hv}{{\widehat{v}}}
\newcommand{\op}{{\overline{p}}}
\newcommand{\realpart}[1]{\operatorname{\sf Re}\left(#1\right)}
\newcommand{\impart}[1]{\operatorname{\sf Im}\left(#1\right)}
\newcommand{\eq}[2]{\begin{equation}\begin{split}#1\end{split}\label{#2}\end{equation}}
\newcommand{\eqnn}[1]{\begin{equation}\begin{split}#1\end{split}\nonumber\end{equation}}
\newcommand{\iR}[1]{\int_{\mathbb{R}} #1 \,{\mbox{d}}\xi}
\newtheorem{hypothesis}[theorem]{Hypothesis}
\title{Stability analysis for combustion fronts traveling in hydraulically resistant porous media \thanks{This work was supported by the National Science Foundation through grants DMS-1311313 (A.~Ghazaryan) and DMS-0908074 (S.~Lafortune) and by the Simons Foundation through the Collaboration grant \#233032 (A.~Ghazaryan).}}
\date{\today}
\author{ A. Ghazaryan\thanks{Department of Mathematics, Miami University, Oxford, OH 45056, email:  \texttt{ghazarar@miamioh.edu}, \texttt{mclarnpc@miamioh.edu}}, \and S. Lafortune\thanks{Department of Mathematics, College of Charleston, Charleston, SC 29424,  email: \texttt{lafortunes@cofc.edu}}, \and P. McLarnan\footnotemark[2] }
\begin{document}

\maketitle

\begin{abstract}
We study front solutions of a system that models combustion in highly hydraulically resistant
porous media. The spectral stability of the fronts is tackled by a combination of energy estimates and numerical
Evans function computations.  Our results suggest that there is a parameter regime for which there are no unstable eigenvalues.
We use recent works about partially parabolic systems 
to prove that in the absence of unstable eigenvalues the fronts are convectively  stable.  
\end{abstract}

\begin{keywords} 
traveling waves, combustion fronts, Evans function, spectral stability,  nonlinear stability, partly parabolic systems
\end{keywords}

\pagestyle{myheadings}
\thispagestyle{plain}

\section{Introduction}
This paper is   devoted to the  stability analysis of a traveling front in a combustion model.   {A combustion front is  a coherent  flame structure  that propagates with a constant velocity.  In a physical system it represents an intermediate asymptotic behavior  of a nonstationary combustion wave \cite{B}. To capture combustion waves, mathematical models are often cast as systems of  partial differential equations (PDEs)    posed on infinite domains,  with multiple spatially homogeneous equilibria that correspond to cold states and completely burned states.  Traveling fronts then are  exhibited as a result of a competition of two different states  of the system.   To know the stability stability   of a combustion front  is of importance  for the general understanding of the underlying phenomena, and has implications, for example,  for   chemical  technology and  fire and explosion safety.  From the mathematical point of view,  combustion models are of interest because they are  formulated  as systems of coupled nonlinear PDEs  of high complexity and as such require refined mathematical techniques for their treatment. }

We consider here a system that is obtained from a model for  combustion in highly hydraulically resistant
porous media \cite{Sivash}. The original model \cite{BGSS, Sivash} is considered to adequately capture the rich physical dynamics of the combustion in hydraulically resistant porous media and at the same time be approachable from the mathematical point of view.  
The model consists of a partly parabolic system of two PDEs coupled through a nonlinearity. Partly parabolic systems are systems where some but not all quantities diffuse.
A reduced system can be obtained  from  the  porous media combustion model suggested in \cite{BGSS, Sivash} for a special value of  the ratio of  pressure and molecular diffusivities or  a special value of the Lewis number,  which is  the ratio of  the thermal diffusivity to mass diffusivity \cite{Gordon_review}.

 The existence and uniqueness of combustion front solutions of the system have been studied extensively \cite{Dkh,GKS,GR,GGJ}. The stability of the fronts has not been studied yet  to the authors' knowledge.  We  present here an attempt to provide  the fullest stability analysis possible   that would  allow for  physical interpretation of the results. Our strategy is to perform the stability study for the reduced system. This has the advantage of simplifying the computations. We then show that when the fronts are considered solutions of the full system, no new instabilities are created and our results concerning the reduced system carry over to the full system.

The first  stage of the stability analysis is to  find the spectrum of the operator obtained by linearization of the PDE system about the wave. 
To do so, we first find the essential spectrum through the standard techniques described in \cite{Henry81}. Concerning the point spectrum, 
we find a bound on the eigenvalues (see Theorem \ref{T2}). Our analysis relies on H\"older's and Young's inequalities and is similar to the study performed in  \cite{Varas02}.  
Once the bound is known, we perform a numerical computation of the Evans function \cite{Evans,Jones,Pego92,Yanagida,Alexander90,Sandstede}. The Evans
function is an analytic function of the spectral parameter  defined in some region of the
complex plane. It was used for example in the context of nerve impulse equations \cite{Evans,Jones,Yanagida,Alexander90}, pulse
solutions to the generalized Korteveg-de Vries, Benjamin-Bona-Mahoney and
Boussinesq equations \cite{Pego92},  multi-pulse solutions to
reaction-diffusion equations \cite{Alexander94,Alexander93},
solutions to perturbed nonlinear Schr\"odinger equations
\cite{Kapitula98a,Kapitula98b,Kapitula98c}, near integrable
systems \cite{Kapitula02}, traveling hole
solutions of the one-dimensional complex Ginzburg-Landau equation near
the Nonlinear-Schr\"odinger limit \cite{Kapitula00}, and
 in the context of combustion problems \cite{Gubernov,BGHL,Ghazaryan2013}. 
The Evans function can be expressed in several different ways depending on the size of the stable and unstable manifolds of the linear system.
In this article, we will consider the simplest case where the Evans function is expressed as the scalar product of a solution of the linear system with a solution of the adjoint system \cite{Evans,Jones,Pego92}. 
Numerical computations of the Evans function in this case was first performed by Evans himself \cite{Ev77} and later by
other authors (see for example  \cite{EvansPego,Pego93,SwEl90}).
We perform a numerical computation of the  Evans function and use it to show that there is a parameter regime for which  there is no unstable discrete spectrum that 
 would cause perturbations to the front to grow exponentially fast in time.

Absence of  unstable discrete spectrum  does not necessarily imply stability. This is due to the fact that   (i) the system is partially parabolic, so the linearized operator is not sectorial and  the semigroup it generates is not analytic but  only $C^0$,  and (ii) the front has continuous spectrum that is merely marginally stable.  We overcome these  issues  by using recently obtained results about partially parabolic systems and about fronts with marginal spectrum.

The structure of the paper is as follows. We first introduce the full system, describe existing results related to the combustion fronts, and then  explain how the reduced system is obtained. Then we analytically and numerically study the stability of the combustion fronts. To simplify our computations, we perform our study on the reduced system.  We are then able to show that the results of our analysis of the reduced case carry over to the full system as well.

\section{Background}

In this paper we consider  a system that is related to the combustion model for   highly hydraulically resistant
porous media \cite{Sivash}.  The porous media combustion model  was formulated in \cite{BGSS}  and consists of  a system of coupled PDEs
\eq{ 
T_t-(1-\gamma^{-1})P_t&=\eps T_{xx}+YF(T),  \\
P_t-T_t&=P_{xx}, \\
Y_t&=\eps {\rm Le}^{-1} Y_{xx}-\gamma YF(T), }{phys0}
where $P$, $T$ and $Y$ are  scaled pressure, 
temperature and  concentration of the deficient reactant; 
$\gamma>1$ is the specific heat ratio, $\eps$ is a ratio of
pressure and molecular diffusivities,  ${\rm Le}$ is a Lewis number
and $YF(T)$ is 
the reaction rate. 
The first and the third equations of the system (\ref{phys0}) represent 
the partially linearized conservation equations for energy and deficient 
reactant, while the second one is a linearized continuity 
equation combined with Darcy's law. The original system of physical laws and the deduction of \eqref{phys0} can be found in \cite{GR}.

It is assumed that  function $F(T)$ is of the Arrhenius type with an
ignition cut-off, that is, $F(T)$ vanishes on an interval
$[0,T_{ign}] $ and is positive for $T>T_{ign}$,
\begin{equation}
\label{general}
F(T)=0 \quad \mbox{for} \quad 0\le T < T_{ign} <1, \quad
\end{equation}
and  $F(T)$ is an increasing Lipschitz continuous function, except for a possible discontinuity at the ignition temperature $T=T_{ign}$.

In  \cite{W}, for example,   in  the system \eqref{phys0}  the reaction rate $F(T)$ is  taken to be  a  nonlinear discontinuous function
\begin{equation}
\label{Fdef}
F_{jump}(T) = \left\{\begin{array}{ll}
\mathrm{exp}\left(Z\left\{\frac{T-(1-\gamma^{-1})}{\sigma+(1-\sigma)T} \right\} \right) & T \geq T_{ign}\\
0 & T < T_{ign}.
\end{array} \right.
\end{equation}
Here $Z>0$ is the  Zeldovich number,  and $0<\sigma<1$ represents physical characteristic of the reactant, more precisely it is a ratio of the characteristic temperatures of fresh and burned reactant \cite{Gordon_review}.

 Initial conditions of the following form are usually considered \cite{Gordon_review}, for the process of initiation of combustion,
\begin{equation}T(0,x)=T_0(x), \quad Y(0,x)=1, \quad P(0,x)=0. \label{PDEic}\end{equation}

For realistic materials $\eps$ varies in the range 
$\eps\sim 10^{-2}-10^{-5}$ \cite{Sivash} and therefore it can be treated as a small parameter.
 There is a numerical evidence that
a small thermal diffusivity has no or very minor effect on details
of propagation of the detonation wave \cite{Sivash}.
Hence, \eqref{phys0} is  often   simplified by setting $\eps=0$,
\eq{ 
T_t-(1-\gamma^{-1})P_t&=YF(T),  \\
P_t-T_t&=P_{xx}, \\
Y_t&=-\gamma YF(T). }{phys00}
We note here that this simplification should be done cautiously as (i) this is a valid simplification  only for materials  for with $Le$ is not small, (ii) in general,  introducing the term   $\eps {\rm Le}^{-1} Y_{xx}$ constitutes  a singular perturbation of the system \eqref{phys00} and as such can produce drastic changes in the dynamics.

For both systems,  of interest is  the existence and uniqueness of traveling fronts that asymptotically connect the completely burned state to the state with all of the reactant present. More precisely, 
 one seeks  solutions of
(\ref{phys0}) or (\ref{phys00}), that are solutions  of the 
form $T(x,t)=T(\xi)$, $P(x,t)=P(\xi)$, $Y(x,t)=Y(\xi)$ where $\xi=x-ct$ and $c$
is the a priori unknown front speed.  In other words,  these solutions are  solutions of the traveling wave ODEs 
\begin{eqnarray}
\label{systE}
-cT^{\prime}+c(1-\gamma^{-1})P^{\prime}&=&
\eps T^{\prime\prime}+YF(T), \nonumber \\
P^{\prime\prime}&=&c(T^{\prime}-P^{\prime}), \\
cY^{\prime}+\eps { {\rm Le}^{-1} } Y^{\prime\prime}&=&\gamma YF(T), \nonumber
\end{eqnarray}
or, in the case $\epsilon=0$,
 \begin{eqnarray}
\label{syst0}
-cT^{\prime}+c(1-\gamma^{-1})P^{\prime}&=&
YF(T), \nonumber \\
P^{\prime\prime}&=&c(T^{\prime}-P^{\prime}), \\
cY^{\prime}&=&\gamma YF(T), \nonumber
\end{eqnarray}
 subject to 
 the  boundary conditions
\eq{
P(-\infty)=1,\quad T(-\infty)=1, \quad Y(-\infty)=0, \\
T (+\infty)=0, \quad P(+\infty)=0,\quad Y(+\infty)=1. 
}{bc0}

Traveling wave solutions  of  (\ref{phys00}) are well studied. In \cite{Dkh} the existence and uniqueness of the front is proved using  Leray-Schauder degree theory and estimates on the speed $c$ are obtained. In \cite{GKS},  based on the ODE analysis  it is   shown that 
 a unique value of $c=c_0$  exists for which a front exists that   asymptotically  connects the burned state to the unburned state.   
{Note that in both cases \cite{Dkh,GKS}, the condition $0<T_{ign}<1-\gamma^{-1}$ is necessary for the front to exist.} 
 In \cite{GR} it is shown that the  solutions of the system (\ref{systE}) converge to the solution of (\ref{syst0}) as $\eps\to 0$.  
In \cite{GGJ} it is shown  that the solution of the system (\ref{syst0}) perturbs to a unique solution of  the system (\ref{systE})  at least for small $\eps$. In part, this justifies the simplification of  (\ref{systE}) to (\ref{syst0}), but only if one studies processes  occurring at speeds   that are not slow  (i.e., of order $O(1)$ or faster)  as  this is a local uniqueness result. 


To our knowledge,  for no parameter values the stability of  the traveling fronts for either of these systems has  been studied. In this paper we study the stability of the fronts solutions  of \eqref{phys00}. To do so, we first perform our analysis on a  reduced version of the system, and then show that our results apply to the full system as well.

\section{Reduced systems}  \label{reduction}

It is obvious that the system \eqref{phys00}  has a time conserved quantity 
$T-(1-\gamma^{-1})P+\gamma^{-1}Y  $. We introduce a new variable $$W=  T-(1-\gamma^{-1})P-\gamma^{-1}(1-Y),$$ and transform  the system \eqref{phys00}    to 
\begin{eqnarray} 
\label{phys00_rrr}
W_t&=&0,\nonumber\\
P_t&=&\gamma P_{xx} +\gamma Y F(W +(1-\gamma^{-1})P+\gamma^{-1}(1-Y)),\\
Y_t&=&-\gamma Y F(W+(1-\gamma^{-1})P+\gamma^{-1}(1-Y)).  \nonumber
\end{eqnarray}
With initial conditions \eqref{PDEic},     $W(x,t)= T_0(x)$, but  we instead shall consider this PDE system with such  initial conditions  that  
{\eq{T(x,0)-(1-\gamma^{-1})P(x,0)-\gamma^{-1}(1-Y(x,0))=0.}{incond1}}
 The system  \eqref{phys00_rrr} then  reduces to 
\eq{
P_t&=\gamma P_{xx} +\gamma Y F((1-\gamma^{-1})P+\gamma^{-1}(1-Y)), \\
Y_t&=-\gamma Y F((1-\gamma^{-1})P+\gamma^{-1}(1-Y)). }{phys00_rr}
Recall that this reduction  makes sense  only  if both $\eps$ and  $\eps {\rm Le}^{-1}$ are small. We also can scale $\gamma$ into the time variable and with the abuse of notation ($t\to \gamma t$)  obtain
\eq{
P_t&= P_{xx} +  Y F((1-\gamma^{-1})P+\gamma^{-1}(1-Y)),\\
Y_t&=-  Y F((1-\gamma^{-1})P+\gamma^{-1}(1-Y)).
}{phys00_r}

An alternative approach is to choose a value  of ${\rm Le}$ in order  to reduce the number of variables in \eqref{phys0}.  Indeed,  it is shown in \cite{Gordon_review} that a linear change of variables:
\begin{equation}\label{ls}
\begin{pmatrix}T\\ P\\Y\end{pmatrix} = \begin{pmatrix}h & 1-h& 0\\ (1-\bar \eps)^{-1}&-\bar \eps(1-\bar \eps)^{-1} &0\\0& 0 & 1\end{pmatrix}  \begin{pmatrix}u\\ v\\y\end{pmatrix},
\end{equation}
where $h$ and $\bar \eps$ are defined as $\bar \eps = \eps(1-\mu)^2$, and $h=\mu/(1-\bar \eps)$  where  $\mu$ is a positive solution of $
(1-\mu)(\gamma+\eps\mu)=1$,
$$\mu=(\sqrt{[(\gamma-\eps)^2+4\eps(\gamma-1)]}+\eps-\gamma)/2\eps,$$
 followed by a scaling of the spatial variable $ x\to \sqrt{1-\mu}\,z$,
 leads to 
\begin{eqnarray} \label{rr}
&& u_t= u_{zz}+yF(hu+(1-h) v), \nonumber \\
&& v_t= \bar \eps v_{zz} +(1-v)F(hu+(1-h) v),  \\
&& y_t=\bar \eps((1-\mu){\rm Le})^{-1} y_{zz} -(1-v)F(hu+(1-h) v),\nonumber
\end{eqnarray}
with  $\bar\eps\in(0,1)$ and $ h\in(0,1)$.

Here, it is obvious that if ${\rm Le}=(1-\mu)^{-1}$  then the variable then $y = 1-v$  defines  an invariant set  for this equation. Indeed, if initially  
\eq{y(0,x) = 1-v(0,x),}{incond2}
  then  always $y(t,x) = 1-v(t,x)$. Therefore the system \eqref{rr} reduces to 
\eq{
u_t&= u_{xx}+yF(hu+(1-h)(1-y)),  \\
y_t&= \bar \eps y_{xx} -yF(hu+(1-h) (1-y)). 
}{r}
One also can notice that if $\bar \eps$ is set to be $0$  in \eqref{r}
\eq{
& u_t= u_{xx}+yF(hu+(1-h) (1-y)),  \\
& y_t= -yF(hu+(1-h) (1-y)),  
}{system}
then the linear substitution  \eqref{ls} is simply a linear scaling 
\begin{equation}\label{ls0}
\begin{pmatrix}T\\ P\\Y\end{pmatrix} = \begin{pmatrix}1-\gamma^{-1} & \gamma^{-1}& 0\\ 1&0 &0\\0& 0 & 1\end{pmatrix}  \begin{pmatrix}u\\ v\\y\end{pmatrix},
\end{equation}
since,   when $\bar\eps =0$, $h=\mu$ and, thus,
\eq{h=1-\gamma^{-1}.}{h}
 Also note that    the assumption $v(0,x)=1-y(0,x)$ is equivalent to assuming that
$$T(0,x) =h u(0,t)+(1-h)v(0,t)=hP(0,t)+(1-h)(1-Y(0,x))$$ or in the notations of the previously described reduction it is equivalent to assuming that $W(x,0)=0$.
 Therefore,  system  \eqref{system} is equivalent to  \eqref{phys00_r} with $W(x,0)=0$. 

To summarize, there are two different ways to reduce  \eqref{phys0}. One is based on setting $\eps$ and $\eps {\rm Le}^{-1}$ equal to $0$, and restricting the set of initial conditions to \eqref{incond1}, thus obtaining \eqref{phys00_rr}. The other  is based on choosing a specific ${\rm Le}^*={\rm Le}(\eps,\mu,\gamma)$ as described above and restrict \eqref{phys0} to the set of initial conditions \eqref{incond2}, thus obtaining \eqref{r}.  We point out that  setting $\bar\eps=0$ in \eqref{r} brings \eqref{r} to the same system  \eqref{phys00_rr} (now \eqref{system}) since when $\epsilon=0$ the choice of ${\rm Le}$ is not important for any fixed value of it.  In addition, under the assumption of $\eps=0$ (respectively, $\bar\eps=0$) \eqref{incond1} and \eqref{incond2} describe the same set of initial conditions.  The nonlinearity $F$ can be then defined as in  \eqref{general} or \eqref{Fdef} with $F(T)=F(hu+(1-h) (1-y))$, where $h$ is simply $1-\gamma^{-1}$.

In this article, we first study the stability of the fronts as solutions of the reduced system \eqref{system} which holds when the initial condition \eqref{incond2} is imposed\footnote{The authors thank P. Gordon for bringing their attention to the importance of the problem of  front stability in the reduced system.}. However, as we show in Section \ref{Conc}, when the condition \eqref{incond2} is relaxed, no additional instabilities are created and our results obtained for the reduced system still hold. Therefore, our stability analysis applies to the front solutions of the system \eqref{phys00} without additional assumptions.


\section{{{Explicit form for the discontinuous term}}}

We consider the system \eqref{system}
with $h  \in (0,1)$.
 Part of our stability analysis is based on numerical calculation of the spectrum of the operator obtained by linearizing \eqref{system}. Therefore we cannot work with a general form of the nonlinearity \eqref{general} but have to be more specific. 
 We aim at working with the discontinuous $F$ given in \eqref{Fdef}, which, in the case of the reduced system \eqref{system} and with the relation \eqref{h}, becomes
 \begin{equation}
\label{Fdefr}
F_{jump}(T) = \left\{\begin{array}{ll}
\mathrm{exp}\left(Z\left\{\frac{T-h}{\sigma+(1-\sigma)T} \right\} \right) & T \geq T_{ign}\\
0 & T < T_{ign}.
\end{array} \right.
\end{equation}
However, for the  stability analysis of the {{front}}, it will be necessary to remove the discontinuity and consider a smooth $F$, which is defined like $F_{jump}$ everywhere except for a small interval $(T_{ign} , T_{ign}+2\delta)$ where the function is modified so as to go to zero in a smooth and monotonic fashion. One way this can be done is to consider
 \begin{equation}
\label{Fdefrc}
F_\delta(T) = \left\{\begin{array}{ll}
\mathrm{exp}\left(Z\left\{\frac{T-h}{\sigma+(1-\sigma)T} \right\} \right) & T \geq T_{ign}+2\delta\\
\\
F_{jump}(T_{ign}+2\delta)\,H_\delta(T-T_{ign}-\delta)&  T_{ign}\leq T < T_{ign}+2\delta\\
\\
0 & T < T_{ign},
\end{array} \right.
\end{equation}
where $H_\delta$ is a smooth approximation  of the Heaviside function, which has the property that
$$
\lim_{\delta\rightarrow 0^+}H_\delta =H,\;H_\delta(x)=1,\;\;{\mbox{for}}\;\; x>\delta, \;\;H_\delta(x)=0,\;\;{\mbox{for}}\;\;x<\delta.$$
The limit above is understood in the distributional sense. A simple example of how this $H_\delta$ can be chosen is given by 
\eq{
H_\delta(x)=\frac{1}{1+{e^{{\frac {4x\delta}{ x^2-\delta^2
 }}}}},\;\;{\mbox{for}}\;\;|x|<\delta.
}{Hex}

While the  stability analysis necessitates the use of a smooth version of $F$, the numerical computation are done for small values of $\delta$ and the nonlinear stability analysis performed in Section \ref{nonl} are valid for any $\delta$. Furthermore, we discuss below the fact that the front solution of   \eqref{system} with $F=F_\delta$ approaches the front solution when $F$ is chosen to be $F_{jump}$ as $\delta\rightarrow 0^+$. Additionally, from the small size of what the jump of $F_{jump}$ would be   in the specific example discussed in Section \ref{SpSt}, it is clear that smoothing the function $F$ have very little effect on the numerics. 
We can thus interpret the stability results of this paper to be valid t when  $\delta$ in \eqref{Fdefrc} is very small.

We now want to prove that the front solution of  \eqref{system} with $F=F_\delta$ approaches the front solution of the discontinuous system as $\delta\rightarrow 0^+$.
The construction of the front solution in \cite{GKS} is based on the existence and uniqueness of a smooth  solution of a boundary value problem in which the dependent variable is $y$ and the independent is $u$. Using the notation corresponding to the system \eqref{system}, the boundary problem is obtained from the traveling wave reduction given below in \eqref{TWZ} and is given by
$$\frac {d y}{du} = \frac{\gamma yF(hu+(1-h)(1-y))}{c^2(1-u-y)},$$
with boundary conditions
$y\left(\frac{T_{ign}}{h}\right)=1$, $y(1)=0$.
Between the boundary points, $u=\frac{T_{ign}}{h}$ and  $u=1$,  the argument of $F$ (given by $hu+(1-h)(1-y)$) varies monotonically
from $T_{gin}$ to $1$. Given that $F$ is Lipschitz continuous on the interval $[T_{ign}, 1]$ and given that $F$ in addition depends on a parameter $\delta$ (such as suggested in \eqref{Fdefrc} with $H_\delta$ given in \eqref{Hex}),   smoothly, and in a way that the conditions in the theorems of \cite{GKS}  are satisfied ({Lipschitz continuity for all $T\geq T_{ign}$,  $F(T)=0$ for $T<T_{ign}$, and $0<T_{ign}<h$}), then the front   and the value of $c$ for which it exists    depend on  $\delta$ in a continuous way.


\section{Front Solution}


We write the system (\ref{system}) in variable $\xi=x-ct$ and $t$:
\begin{equation}\label{e:6}
\begin{aligned}u_t &= u_{\xi\xi} + c u_\xi + yF(w), \\
y_t &= cy_\xi-yF(w),
\end{aligned}
\end{equation}
where, for brevity,
$w\equiv hu+(1-h)(1-y).$
The front solution then becomes a stationary solution of \eqref{e:6}, i.e. a solution of the dynamical system
\begin{equation}
\label{TWZ}
\begin{aligned}
&u_{\xi\xi} + cu_\xi + yF(w) = 0,\\
& cy_\xi - yF(w) = 0.
\end{aligned}
\end{equation}
We denote  by $(u,y)=(\widehat{u},\widehat{y})$ the front solution, which satisfies
\begin{equation}\label{bc} (\hu,\hy) \to (1,0) \text{ as } \xi\to  -\infty, \text{ and } (\hu,\hy) \to (0,1)  \text{ as } \xi\to  \infty.\end{equation}

In the next sections we study the stability of the front  $(\widehat{u},\widehat{y})$. The most important part of the stability analysis is  to find   the spectrum of the linear operator produced by the front.

\section{Linearization}


In this section, we  analytically study the spectrum of the linear operator arising from the linearisation
of \eqref{e:6} about the front solution. We will first find the essential spectrum and then find a bound on any eigenvalue with positive real part.
We point out that neither the procedure of finding the essential spectrum nor the numerical calculation of the  discrete spectrum are sensitive  with respect the choice of $\delta$ in \eqref{Fdefrc}. Indeed, the computation of the continuous spectrum will be the same for any choice of $\delta$ in the smooth version of $F$. Furthermore, as discussed before, the choice of $\delta$ is expected to have little effect on the numerical computations if $\delta$ is chosen to be small enough.  However, in order to have a linear operator that is smooth at every point along the front, we  consider the smooth version.


The eigenvalue problem arising from the linearization of \eqref{e:6} about $(\widehat{u},\widehat{y})$ reads
\begin{equation}
\label{EigProb}
\begin{aligned}\lambda p &= p_{\xi\xi} + c p_\xi + F_w(\hw)\,\hy\,(hp-(1-h)q)+F(\hw)q, \\
\lambda q &= c q_\xi-F_w(\hw)\,\hy\, (hp-(1-h)q)-F(\hw)q.
\end{aligned}
\end{equation}

Alternatively, this system  can be written  as 
${\mathcal{L}}W=\lambda W$, $W=(p,q)^T$,
where
\begin{equation}
{\mathcal L}= \begin{pmatrix}1& 0 \\
0 &0
\end{pmatrix}\frac{\rd^2}{\rd \xi^2}+c\frac{\rd}{\rd \xi}+ 
\begin{pmatrix}
F_w(\hw)\,\hy\,h& F_w(\hw)\,\hy\,(1-h)+F(\hw) \\
-F_w(\hw)\,\hy\,h & -F_w(\hw)\,\hy\,(1-h)-F(\hw)
\end{pmatrix}.
\label{oPDEf}
\end{equation}

Generally speaking, a traveling wave  is called spectrally stable in $L^2(\mathbb{R})$ if the spectrum of the linear operator $\mathcal L$ on $L^2(\mathbb{R})$  is contained in the half-plane $\{\mathrm{Re} \, \lambda \le -\nu\}$ for some $\nu>0$ with the  exception for a simple zero eigenvalue which is generated by the translational invariance. 

If the discrete spectrum of the linear operator $\mathcal L$ in $L^2(\mathbb{R})$ is stable, i.e it is contained in the half-plane  $\{\mathrm{Re} \lambda \le -\nu\}$ for some $\nu>0$, except for a simple eigenvalue at 0, but the essential spectrum  has nonempty intersection with the imaginary axis then the wave sometimes is called spectrally unstable due to essential spectrum.

The system \eqref{e:6} is an example of a partly parabolic system: the $v$-component does not diffuse.   Partly parabolic systems  or, as they are also called, partly dissipative \cite{SY}, or   partially degenerate \cite{V} are not as well studied as parabolic problems. The stability analysis in parabolic problems is based on the fact that the linearized operators are sectorial and thus generate analytic semigroups \cite{Henry81}. In the  case of the system   \eqref{e:6},   
the linear operator $\mathcal  L$  is not sectorial, so it  generates a $C^0$ semigroup. Hence, unlike the case of sectorial operators,  generally speaking, the spectral stability of the wave does not simply imply    either the linear stability of the traveling wave, i.e., the
decay of the solutions of the linearized system, or   the orbital stability of the wave, as it does in case  of sectorial operators  \cite{Henry81}.
Recall that a wave is called orbitally stable   if a solution  that starts near the wave   stays  close to the family of translates of the wave, in some norm  not necessarily the same as the norm of the original space.
If, in addition, it converges (in some norm) to a particular  translate of the wave then the wave is called orbitally or nonlinearly  stable with asymptotic phase.

In any case, the stability analysis of a front starts  with finding  its spectrum.  The spectrum of a front consists of  the isolated  eigenvalues of finite multiplicity  of $\mathcal L$ and the continuous part of the spectrum that is called essential.

\subsection{Essential Spectrum}
\label{Ess}
We define
${\mathcal{L}}^{\pm \infty}=\lim_{x\rightarrow \pm\infty}{\mathcal{L}}$ 
and  compute them by inserting $\hy=1$ and $\hw=0$ into (\ref{oPDEf}) for ${\mathcal{L}}^{+\infty}$ and 
$\hy=0$ and $\hw=1$ for ${\mathcal{L}}^{-\infty}$:
$${\mathcal{L}}^{+\infty}=
 \begin{pmatrix}\frac{\rd^2}{\rd \xi^2}+c\frac{\rd}{\rd \xi}& 0 \\
0 & c\frac{\rd}{\rd \xi}
\end{pmatrix},\;\;
{\mathcal{L}}^{-\infty}=
 \begin{pmatrix}\frac{\rd^2}{\rd \xi^2}+c\frac{\rd}{\rd \xi}& e^{(1-h)Z} \\
0 & c\frac{\rd}{\rd \xi}-e^{(1-h)Z}
\end{pmatrix}.
$$
 The boundaries of the essential spectrum of ${\mathcal{L}}$ is given 
by the spectra of constant coefficient operators ${\mathcal{L}}^{\pm\infty}$ (see \cite{Henry81}, Theorem A.2, p.\ 140). 
We compute these spectra using 
Fourier analysis, which amounts to computing the value of $\lambda$ for which
\begin{equation}
\label{det}
{\mbox{det}}\left({\mathcal{L}}^{\pm\infty}_\sigma-\lambda I\right)=0,
\end{equation}
where ${\mathcal L}^{\pm\infty}_\sigma$ are obtained from ${\mathcal L}^{\pm\infty}$ by replacing $\frac{\rd}{\rd \xi}$ with $i\, \sigma$.
The set of curves defined by the equation \eqref{det}
\eq{
\{\lambda&=-\sigma^2+ci\sigma; \,\,\sigma\in \mathbb R\} \cup\{\lambda = ci\sigma;\,\,\sigma\in\mathbb R\} , \\
\{ \lambda&=-\sigma^2+ci\sigma; \,\,\sigma\in \mathbb R\} \cup\{\lambda =c i\sigma - e^{(1-h)Z)};\,\,\sigma\in\mathbb R\}
}{es1}
 divides the complex plane into regions that are either covered by spectrum or, otherwise, contain only discrete eigenvalues. Here the first set is due to behavior at $+\infty$ and the second due to behavior at $-\infty$.
It is found that the rightmost curve defined by (\ref{det}) is $\lambda=\ri \sigma$, i.e. the imaginary axis. Therefore the essential spectrum of $\mathcal L$ contains the imaginary axis as its rightmost boundary. The open left side of the 
complex plane can only contain point spectrum, i.e. isolated eigenvalues with finite multiplicity \cite{Simon06}.

So on the spectral level the front is unstable at least due to the essential spectrum on the imaginary axis. We shall numerically show in the following sections that for some parameter values,  the essential spectrum covering all of the imaginary axes   is the only unstable spectrum that the front has. We test the essential spectrum using an exponential weight $e^{\alpha \xi}$ to see if at least on the spectral level the instability is convective. Indeed, introducing exponential weight in the eigenvalue problem by using new variable $(\widetilde p, \widetilde q)= e^{\alpha \xi}(p,q)$ yields
\begin{equation}
\label{EigProbweight}
\begin{aligned}\lambda \widetilde p &= \widetilde p_{\xi\xi} + (c-2\alpha)  \widetilde p_\xi + F_w(\hw)\,\hy\,(h\widetilde p-(1-h)\widetilde q)+F(\hw)\widetilde q +(- c\alpha +\alpha^2)\widetilde p, \\
\lambda \widetilde q &= (c-2\alpha) \widetilde q_\xi-F_w(\hw)\,\hy\, (h\widetilde p-(1-h)\widetilde q)-F(\hw)\widetilde q -c\alpha \widetilde q.
\end{aligned}
\end{equation}
We use the constant coefficient operators $ {\mathcal{L_{\alpha}}}^{\pm\infty}$ obtained from ${\mathcal L}^{\pm\infty}$ by replacing $\frac{\rd}{\rd \xi}$ with $i\, \sigma -\alpha$ and find the following curves for the boundaries of the essential spectrum:
\eq{
\{\lambda&=(i\sigma-\alpha)^2+c(i\sigma-\alpha); \,\,\sigma\in \mathbb R\} \cup\{\lambda = c(i\sigma-\alpha);\,\,\sigma \in\mathbb R\} , \\
 \{\lambda&=(i\sigma-\alpha)^2+c(i\sigma-\alpha); \,\,\sigma\in \mathbb R\} \cup\{\lambda =c (i\sigma-\alpha) - e^{(1-h)Z)};\,\,\sigma\in\mathbb R\}.
}{es2}
{ The right most boundary   $\lambda= -\sigma^2 + \alpha^2 -\sigma\alpha + (c-2\alpha)i\sigma$   of this set, as a curve on the complex plane,  is depicted on Fig.~\ref{fig:esssp}.  An  exponential weight $e^{\alpha\xi}$ with a rate  such that $\alpha^2-c\alpha<0$ but  $c-2\alpha>0$  pushes these curves to the left of the imaginary axis. }   We formulate this fact as a lemma.

\begin{figure}
\begin{center}
\scalebox{.21}{{\includegraphics{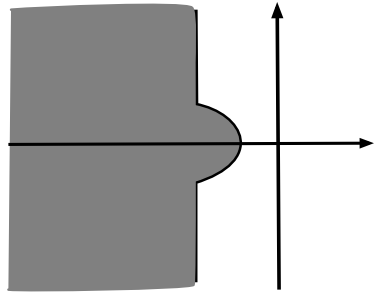}}}
\caption{\label{fig:esssp}  Essential spectrum in the  space with an exponential weight. }\end{center}
\end{figure}

\begin{lemma}
\label{lem:essp}
Consider the system \eqref{system}  with $c=c(\gamma, h, \sigma )$ such that  a traveling front exists that  the boundary conditions \eqref{bc} are satisfied.   Suppose $0<\alpha<c/2$.  Then there exists $\nu>0$ such that the essential spectrum of $\mathcal L$ on $E_\alpha$ is contained in a half-plane $\mathrm{Re} \,\lambda \le -\nu<0$.
\end{lemma}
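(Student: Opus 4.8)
The plan is to read the essential spectrum of $\mathcal{L}$ on the weighted space $E_\alpha$ directly off the dispersion relations \eqref{es2} and then to bound, uniformly in $\sigma$, the real part of each of the three families of curves. By Theorem A.2 of \cite{Henry81}, the essential spectrum of $\mathcal{L}$ on $E_\alpha$ is determined by the limiting constant-coefficient operators $\mathcal{L}^{\pm\infty}$ and lies on or to the left of the curves traced by the roots $\lambda(\sigma)$ of $\det(\mathcal{L}^{\pm\infty}_\sigma - \lambda I)=0$ with $\frac{\rd}{\rd\xi}$ replaced by $i\sigma-\alpha$; these are precisely the curves in \eqref{es2}. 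Hence it suffices to exhibit $\nu>0$ with $\sup_\sigma \mathrm{Re}\,\lambda(\sigma)\le-\nu$ for each family, since the spectrum cannot extend to the right of its rightmost Fredholm border.

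First I would split each curve into real and imaginary parts. For the diffusive family coming from the parabolic block, expanding $(i\sigma-\alpha)^2+c(i\sigma-\alpha)$ gives
\[
\mathrm{Re}\,\lambda = -\sigma^2+\alpha^2-c\alpha,\qquad \mathrm{Im}\,\lambda=(c-2\alpha)\sigma,
\]
so this is a leftward-opening parabola whose rightmost point is at $\sigma=0$, where $\mathrm{Re}\,\lambda=\alpha^2-c\alpha=-\alpha(c-\alpha)$. The two non-diffusive families $\lambda=c(i\sigma-\alpha)$ and $\lambda=c(i\sigma-\alpha)-e^{(1-h)Z}$ are vertical lines with $\mathrm{Re}\,\lambda=-c\alpha$ and $\mathrm{Re}\,\lambda=-c\alpha-e^{(1-h)Z}$, independent of $\sigma$. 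Since $0<\alpha<c/2<c$, each of the three suprema is strictly negative: $\alpha(c-\alpha)>0$, $c\alpha>0$, and $e^{(1-h)Z}>0$. Comparing them, $\alpha(c-\alpha)=c\alpha-\alpha^2<c\alpha$, so the parabola reaches furthest to the right, and I would set $\nu=\alpha(c-\alpha)>0$. Every point of every dispersion curve then satisfies $\mathrm{Re}\,\lambda\le-\nu$, and as the essential spectrum lies on or to the left of these curves, it is contained in $\{\mathrm{Re}\,\lambda\le-\nu\}$.

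There is essentially no analytical obstacle here: the content is the explicit dispersion computation already displayed in \eqref{es2}, and the conclusion is a one-line optimization in $\sigma$. The only point requiring care is the justification that the displayed curves genuinely bound the essential spectrum on the right even though the system is merely partly parabolic — this is where one cites the constant-coefficient characterization for the limiting operators $\mathcal{L}^{\pm\infty}$ and notes that the non-diffusive components contribute the two vertical lines, both lying strictly to the left of the rightmost point of the parabola. I would also remark that only $\alpha<c$ is actually needed to make $\nu$ positive (this is the condition $\alpha^2-c\alpha<0$); the stronger hypothesis $\alpha<c/2$, equivalently $c-2\alpha>0$, keeps the effective transport speed in \eqref{EigProbweight} positive and is used in the later convective-stability argument rather than for this real-part bound.
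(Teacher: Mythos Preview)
Your proposal is correct and follows the same approach as the paper: the paper's argument is precisely the brief paragraph preceding the lemma, which identifies the parabola from \eqref{es2} as the rightmost Fredholm border and observes that its vertex has real part $\alpha^2-c\alpha<0$ under the stated hypothesis. Your write-up simply makes this explicit by computing $\nu=\alpha(c-\alpha)$ and verifying that the two vertical lines sit strictly to the left of the parabola's vertex; your closing remark that only $\alpha<c$ is needed for $\nu>0$, with $\alpha<c/2$ reserved for the later convective argument, is a correct and useful clarification that the paper leaves implicit.
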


\subsection{Bound on the point spectrum}

We assume that $\lambda$ is an eigenvalue of ${\mathcal{L}}$ defined in  (\ref{oPDEf}) with $\realpart{\lambda}>0$. We want to find a bound on the norm of $\lambda$ in the complex plane. The argument we use here follows closely the argument presented in Section 3 of \cite{Varas02}. 

We first want to prove the following lemma, which provides a bound on $\lambda$.
\begin{lemma}\label{Lemma1}
Assume $\lambda$ is an eigenvalue with positive real part for the problem (\ref{EigProb}) on $L^2(\mathbb{R})$. Then
\eq{|q(\xi)|\leq h_2(\xi)\|p\|_{L^2},}{firstin}
where 
\eq{h_1(\xi)&=\frac{1}{c}\int_{\xi}^\infty{( (1-h)F_w(\hw(z))\hy(z)-F(\hw(z))){\mbox{d}}z}
}{h1h2}
and
\eq{h_2(\xi)= \frac{h}{c}\sqrt{\int^{+\infty}_{\xi}( F_w(\widehat w(z))\widehat y(z) )^2e^ {-2(h_1(\xi)-h_1(z))}dz}.}{h2}
\end{lemma}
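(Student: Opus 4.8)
The plan is to exploit the fact that the second equation of \eqref{EigProb} is a \emph{linear first-order} ODE for $q$ in which $p$ enters only as an inhomogeneous forcing term. Evaluating $w\equiv hu+(1-h)(1-y)$ on the front as $\hw$, I would first rearrange that equation into the normal form $q_\xi = a(\xi)\,q + b(\xi)\,p$, with $a(\xi)=\tfrac{1}{c}\bigl(\lambda-(1-h)F_w(\hw)\hy+F(\hw)\bigr)$ and $b(\xi)=\tfrac{h}{c}F_w(\hw)\hy$. Because $F$ is increasing (so $F_w\ge 0$) and $\hy\ge 0$ along the front, the forcing coefficient satisfies $b(\xi)\ge 0$, which lets me pass to absolute values without losing information.

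Next I would solve this scalar ODE by an integrating factor, taking care to select the \emph{bounded} solution. Since $F(\hw)$ and $F_w(\hw)$ vanish at $+\infty$ (there $\hw=0<T_{ign}$ by \eqref{bc}), one has $\realpart{a(\xi)}\to \realpart{\lambda}/c>0$ as $\xi\to+\infty$, so the homogeneous solution grows at $+\infty$. Hence the unique solution with $q\in L^2$ is obtained by integrating the forcing inward from $+\infty$, namely $q(\xi) = -\int_{\xi}^{\infty}\exp\!\bigl(\int_s^{\xi}a(\tau)\,d\tau\bigr)\,b(s)\,p(s)\,ds$; the boundary contribution at $+\infty$ drops because the integrating factor decays while $q$ is square integrable.

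Then I would estimate. Taking moduli, the kernel is controlled by $\exp\bigl(\int_s^{\xi}\realpart{a(\tau)}\,d\tau\bigr)$. For $s\ge\xi$ the $\realpart{\lambda}$ contribution carries a favorable (nonpositive) sign and can be discarded, while the remaining, $\lambda$-free part of $\realpart{a}$ integrates to $h_1(\xi)-h_1(s)$ — this is exactly what motivates the definition \eqref{h1h2} of $h_1$ as the antiderivative of the non-$\lambda$ part of the coefficient (the sign in the exponent being dictated by integrating from $+\infty$). This yields $|q(\xi)|\le \tfrac{h}{c}\int_\xi^\infty e^{\,h_1(\xi)-h_1(s)}\,F_w(\hw(s))\,\hy(s)\,|p(s)|\,ds$. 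Applying the Cauchy--Schwarz inequality (H\"older with exponents $2,2$) to split off $|p|$, and bounding the tail $\bigl(\int_\xi^\infty|p|^2\bigr)^{1/2}\le\|p\|_{L^2}$, reproduces precisely the weighted $L^2$ quantity defining $h_2$ in \eqref{h2}, giving $|q(\xi)|\le h_2(\xi)\,\|p\|_{L^2}$.

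The step I expect to require the most care is the selection of the correct (decaying) branch of solutions together with the vanishing of the boundary term at $+\infty$: this is where $\realpart{\lambda}>0$ and the front asymptotics (so that $F$ and $F_w$ die out at $+\infty$) are essential, and it is what both legitimizes the integral representation for the $L^2$ eigenfunction and guarantees that $h_2(\xi)$ is finite. One should also record in passing that $h_1$ is well defined, since the integrand in \eqref{h1h2} is concentrated in the reaction zone, so the weight $e^{\,h_1(\xi)-h_1(s)}$ makes sense; the remaining manipulations are routine once the representation is in hand.
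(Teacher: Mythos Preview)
Your proposal is correct and follows essentially the same route as the paper: rewrite the second equation of \eqref{EigProb} as a first-order linear ODE for $q$ with forcing $\tfrac{h}{c}F_w(\hw)\hy\,p$, solve it via the integrating factor by integrating from $+\infty$, drop the factor $e^{\realpart{\lambda}(\xi-s)/c}\le 1$, and finish with Cauchy--Schwarz. If anything, you are more careful than the paper in justifying the choice of the decaying branch and the vanishing of the boundary contribution at $+\infty$, points the paper passes over silently.
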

\begin{proof}
We first observe that the expression defining $F$  in (\ref{Fdefrc})  for $w\geq T_{ign}$ is a nondecreasing function and
thus $F_w$ is nonnegative for all $w$, $\sigma$, and $h$ in $(0,1)$.
We now rewrite the second equation of (\ref{EigProb}) as a first order non-homogeneous linear equation for $q$:
\eq{c q_\xi+\left((1-h) F_w(\hw)\,\hy-F(\hw)-\lambda\right)q=h\,F_w(\hw)\,\hy\, p,}{qeq}
which can be rewritten as
\eq{\left(q(\xi)\exp\left(-h_1(\xi)-\frac{\lambda}{c}\xi\right)\right)'=\frac{h}{c}F_w(\hw(\xi))\hy(\xi)\,   \exp\left(-h_1(\xi)-\frac{\lambda}{c}\xi\right)p(\xi),}{qeqint}
where $h_1$ is as in the lemma. Integrating both sides of (\ref{qeqint}) and solving for $q$ gives
\eq{q(\xi)=-\frac{h}{c}\exp\left(h_1(\xi)\right)\int_\xi^\infty  F_w(\hw(z)) \hy(z)\,    \exp\left(-h_1(z)+\frac{\lambda}{c} (\xi-z)\right)p(z)\,{\mbox{d}}z,}{qsolved}
which implies
\begin{eqnarray*}|q(\xi)|& \leq& \frac{h}{c} \exp(h_1(\xi))\int^{+\infty}_{\xi} F_w(\widehat w(z))\widehat y(z) \exp (-h_1(z)+\frac{\lambda}{c}(\xi-z))p(z)dz\\
&\leq &  \frac{h}{c} \exp(h_1(\xi))\int^{+\infty}_{\xi} F_w(\widehat w(z))\widehat y(z) \exp (-h_1(z))p(z)dz \\
&\leq &  \frac{h}{c}  \|p\|_{L^2}  \sqrt{\int^{+\infty}_{\xi}( F_w(\widehat w(z))\widehat y(z) )^2e^ {-2(h_1(\xi)-h_1(z))}dz},  \end{eqnarray*} 
where we have used H\"older's inequality to find the last expression. The last inequality gives (\ref{firstin}).
 \qquad\end{proof}

\begin{theorem}
\label{T2}
If  $\lambda$ is an eigenvalue with positive real part, then we have the following inequalities:
{{\eq{\realpart{\lambda}\leq h_3+h_4\;\;{\mbox{and}}\;\;\left|{\lambda}\right| \leq  \frac{c^2}{4}+\sqrt{2}\,h_3+h_4,}{rein126z}}}
where 
{{$h_3= \sqrt{{{}}\iR{\left(F(\hw)+(1-h){F_w(\hw)\,\hy }\right)^2h_2(\xi)^2}}$ and $h_4=h\sqrt{\iR{\left(F_w(\hw)\,\hy\right)^2}}.$}}

\end{theorem}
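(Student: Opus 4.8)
The plan is to follow the energy method of \cite{Varas02}: take the $L^2(\mathbb R)$ inner product of the first equation of \eqref{EigProb} with $p$, that is, multiply by $\bar p$ and integrate over $\mathbb R$. Because $\realpart{\lambda}>0$ places $\lambda$ strictly to the right of the essential spectrum found in Section~\ref{Ess}, the eigenfunction $(p,q)$ decays exponentially, so $p,p_\xi\to 0$ at $\pm\infty$ and every integration by parts has vanishing boundary terms. Using $\int_{\mathbb R}\bar p\,p_{\xi\xi}\,\rd\xi=-\|p_\xi\|_{L^2}^2$ and the fact that $\int_{\mathbb R}\bar p\,p_\xi\,\rd\xi$ is purely imaginary (since $\realpart{\int_{\mathbb R}\bar p\,p_\xi\,\rd\xi}=\tfrac12\int_{\mathbb R}(|p|^2)_\xi\,\rd\xi=0$), I obtain
\[
\lambda\|p\|_{L^2}^2=-\|p_\xi\|_{L^2}^2+c\,\ri\,\beta+h\int_{\mathbb R}F_w(\hw)\hy\,|p|^2\,\rd\xi+\int_{\mathbb R}\bar p\,q\big(F(\hw)-(1-h)F_w(\hw)\hy\big)\,\rd\xi,
\]
where $\beta=\impart{\int_{\mathbb R}\bar p\,p_\xi\,\rd\xi}$ satisfies $|\beta|\le\|p\|_{L^2}\|p_\xi\|_{L^2}$ by Cauchy--Schwarz.

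To get the first inequality I would take real parts and discard $-\|p_\xi\|_{L^2}^2\le 0$. The diagonal term $h\int_{\mathbb R}F_w(\hw)\hy\,|p|^2\,\rd\xi$ is nonnegative (recall from the proof of Lemma~\ref{Lemma1} that $F_w\ge0$, and $\hy\ge0$) and is controlled, via H\"older's inequality, by $h_4\|p\|_{L^2}^2$. For the off-diagonal term I would use the nonnegativity of $F$ and of $(1-h)F_w\hy$ to majorize $|F(\hw)-(1-h)F_w(\hw)\hy|\le F(\hw)+(1-h)F_w(\hw)\hy$, then Cauchy--Schwarz together with the pointwise bound $|q(\xi)|\le h_2(\xi)\|p\|_{L^2}$ from Lemma~\ref{Lemma1}, which gives exactly $\int_{\mathbb R}|p||q|\big(F+(1-h)F_w\hy\big)\,\rd\xi\le h_3\|p\|_{L^2}^2$. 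Dividing by $\|p\|_{L^2}^2$ yields $\realpart{\lambda}\le h_3+h_4$.

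For the bound on $|\lambda|$ I would retain the diffusion term rather than discard it. The convection contribution enters through the imaginary part as $c|\beta|\le c\|p\|_{L^2}\|p_\xi\|_{L^2}$, and Young's inequality in the balanced form $c\|p\|_{L^2}\|p_\xi\|_{L^2}\le\tfrac{c^2}{4}\|p\|_{L^2}^2+\|p_\xi\|_{L^2}^2$ produces the term $\tfrac{c^2}{4}$ together with a gradient term $\|p_\xi\|_{L^2}^2$ that is absorbed by the retained $-\|p_\xi\|_{L^2}^2$ from the real part. Combining the real- and imaginary-part estimates through $|\lambda|\le\sqrt{(\realpart{\lambda})^2+(\impart{\lambda})^2}$, both of which carry an $h_3$ contribution coming from the off-diagonal term, is what accounts for the factor $\sqrt2$ in front of $h_3$, giving $\left|\lambda\right|\le\tfrac{c^2}{4}+\sqrt2\,h_3+h_4$.

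I expect the main obstacle to be the constant bookkeeping in the last step: one must split the complex identity into its real and imaginary parts, choose the Young's-inequality weights so that the gradient term $\|p_\xi\|_{L^2}^2$ cancels exactly against the diffusion term, and track how the two $h_3$-contributions recombine to produce the sharp constants $\tfrac{c^2}{4}$ and $\sqrt2$ rather than cruder ones. A secondary, but essential, structural point on which the whole scheme rests is the nonnegativity of $F_w$ (and of $\hy$): it is what makes both the pointwise estimate of Lemma~\ref{Lemma1} and the majorization $|F-(1-h)F_w\hy|\le F+(1-h)F_w\hy$ legitimate, so that $h_2$, $h_3$, and $h_4$ are well defined and real.
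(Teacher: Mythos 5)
Your derivation of the basic identity (multiplying the first equation of \eqref{EigProb} by $\overline p$, integrating, and using that $\int_{\mathbb R}p_\xi\overline p\,\rd\xi$ is purely imaginary) and your proof of the first inequality $\realpart{\lambda}\le h_3+h_4$ coincide with the paper's: take real parts, drop $-\|p_\xi\|_{L^2}^2$, majorize $|F(\hw)-(1-h)F_w(\hw)\hy|$ by $F(\hw)+(1-h)F_w(\hw)\hy$ using the nonnegativity of $F_w$ and $\hy$, and apply Lemma \ref{Lemma1} together with H\"older's inequality. The one place where your plan, as literally written, does not go through is the final combination for $|\lambda|$. You propose to bound $\realpart{\lambda}$ and $|\impart{\lambda}|$ separately and then combine them through $|\lambda|\le\sqrt{(\realpart{\lambda})^2+(\impart{\lambda})^2}$; but the absorption of the Young's-inequality gradient term $\|p_\xi\|_{L^2}^2$ into the $-\|p_\xi\|_{L^2}^2$ coming from the real part --- which you correctly identify as essential --- requires that the two estimates be \emph{added}, not combined in quadrature, since those gradient terms sit in different equations and the real-part bound is not sign-definite once the gradient is retained. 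The paper does exactly this: it adds \eqref{rez} to the modulus of \eqref{fez2} to estimate $\realpart{\lambda}+|\impart{\lambda}|$, applies Young's inequality inside that sum so the gradients cancel, and then uses $|\lambda|\le\realpart{\lambda}+|\impart{\lambda}|$. This also relocates the source of the $\sqrt2$: it does not come from combining two $h_3$-contributions in quadrature (that would cost only a factor $1$, by Minkowski), but from the elementary bound $|\realpart{q\,\op}|+|\impart{q\,\op}|\le\sqrt2\,|p|\,|q|$ applied to the off-diagonal term in the summed estimate. With that one adjustment your argument is precisely the paper's proof.
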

\begin{proof}
We multiply the first equation of (\ref{EigProb}) by $\overline{p}$ and we then integrate and get
\eq{
\lambda \iR{|p|^2}=c\iR{p_\xi \overline{p}}-\iR{|p_\xi|^2}&+\iR{\left(F(\hw)-(1-h)F_w(\hw)\,\hy\right) \,q\,\op}\\&+h\iR{F_w(\hw)\,\hy\,|p|^2}.
}{fez}
We take the real part of (\ref{fez}) to get
\eq{\realpart{\lambda}\iR{|p|^2}=-\iR{|p_\xi|^2}+\iR{\left(F(\hw)-(1-h){F_w(\hw)\,\hy }\right)\,\realpart{q\,\op}}\\+h\iR{F_w(\hw)\,\hy\,|p|^2}.}{rez}
From (\ref{rez}), we obtain the inequality
$$\realpart{\lambda}\iR{|p|^2}\leq \iR{\left(F(\hw)+(1-h){F_w(\hw)\,\hy }\right)\,|\realpart{q\,\op}|}+h\iR{F_w(\hw)\,\hy\,|p|^2}.$$
Using the fact that $|\realpart{q\,\op}|\leq |p\,q|$ and Lemma \ref{Lemma1}, the inequality above implies
$$\realpart{\lambda}\|p\|_{L^2}^2\leq \|p\|_{L^2}\iR{\left(F(\hw)+(1-h){F_w(\hw)\,\hy }\right)h_2(\xi)\,|p|}+h\iR{F_w(\hw)\,\hy\,|p|^2}.$$
We then prove the first part of (\ref{rein126z}) by using H\"older's inequality. 


{{
We now take the imaginary part of (\ref{fez}) to obtain
\eq{
\impart{\lambda} \iR{|p|^2}=-\ri c\iR{p_\xi \overline{p}}+\iR{\left(F(\hw)-(1-h)F_w(\hw)\,\hy\right) \impart{\left(\,q\,\op\right)}}.
}{fez2} 
We then add (\ref{rez}) and the modulus of (\ref{fez2}) to obtain
\eqnn{\left(\realpart{\lambda}+\left|\impart{\lambda}\right|\right)\iR{|p|^2}&\leq
c\iR{|p_\xi| |\overline{p}|}-\iR{|p_\xi|^2}+h\iR{F_w(\hw)\,\hy\,|p|^2}\\&+\iR{\left(F(\hw)+(1-h){F_w(\hw)\,\hy }\right)\,\left(\left|\realpart{q\,\op}\right|+\left|\impart{q\,\op}\right|\right)}.}
Next, we  apply Young's inequality to get ${c\,|p_\xi ||p|\leq \frac{c^2|p|^2}{4}+|p_\xi|^2}$. Using $|\lambda|\leq \realpart{\lambda}+\left|\impart{\lambda}\right|$, $\left|\realpart{q\,\op}\right|+\left|\impart{q\,\op}\right|\leq \sqrt{2}|p||q|$,  Lemma \ref{Lemma1} and H\"older's inequality, we prove the second part of (\ref{rein126z}).

}}

\qquad\end{proof}

\section{Numerical Computations}

Note that for all the numerical computations presented in this section, the function $F$ is chosen to be in the form \eqref{Fdefrc}. We use the approximation of the Heaviside function to be the one given in \eqref{Hex} and we choose $\delta$ to be 0.01.

\subsection{Front Solution}

In order to compute the front solution of (\ref{TWZ}) numerically, we perform the following manipulations. 
We add the two equations of (\ref{TWZ}) together  and integrate. We obtain the following system of equations
\begin{equation}
\begin{aligned}&u_\xi + cu + cz = r \\
& y_\xi -\frac{1}{c} yF(hu+(1-h)(1-y)) = 0,
\end{aligned}
\end{equation}
where $r$ is constant. Since front solution $(\hu,\hv)$
converges to (0,1) as $\xi\rightarrow\infty$, we set $r=c$.  We are thus left with the following two-dimensional 
dynamical system to integrate:
\begin{equation}
\begin{aligned}u_\xi &= c(1-y-u) \label{front_eps0} \\
y_\xi &=  \frac{1}{c}yF\left(hu+(1-h)(1-y)\right).
\end{aligned}
\end{equation}
We are interested in the heteroclinic orbit that connects the the fixed points $(1,0)$ and $(0,1)$. The unstable direction of the point (1,0) is given by 
$(c^2,-c^2-e^{(1-h)Z})$ with corresponding eigenvalue $e^{(1-h)Z}/c$. To find the front solution for given values of $h$, $\sigma$, {$T_{ign}$, $\delta$}, and $Z$, 
we use a simple shooting method and make incremental changes in the value of $c$ until  an appropriate connection to the fixed point (0,1) is found. {{Figure \ref{SpeedFigure} shows how the speed $c$ varies as a function of the various parameters.
In the case where $h=0.3$, $Z=6$, $\sigma=0.25$, $T_{ign}=0.003$, {and $\delta=0.01$}, it is found that $c=1.4684$. 
}} Figure \ref{FigFront} shows the front solution as a function of $\xi$ in that case. 
Note that if $F$ would be chosen to be $F_{jump}$, the size of the jump  at the discontinuity would be $F(T_{ign})=8.55\times 10^{-4}$ in this case. Therefore, there would be very little change in the numerics if the computations were done using the discontinuous version of $F$ given in \eqref{Fdefr}.

\begin{figure}
\subfigure[]{
{\includegraphics[width=180pt]{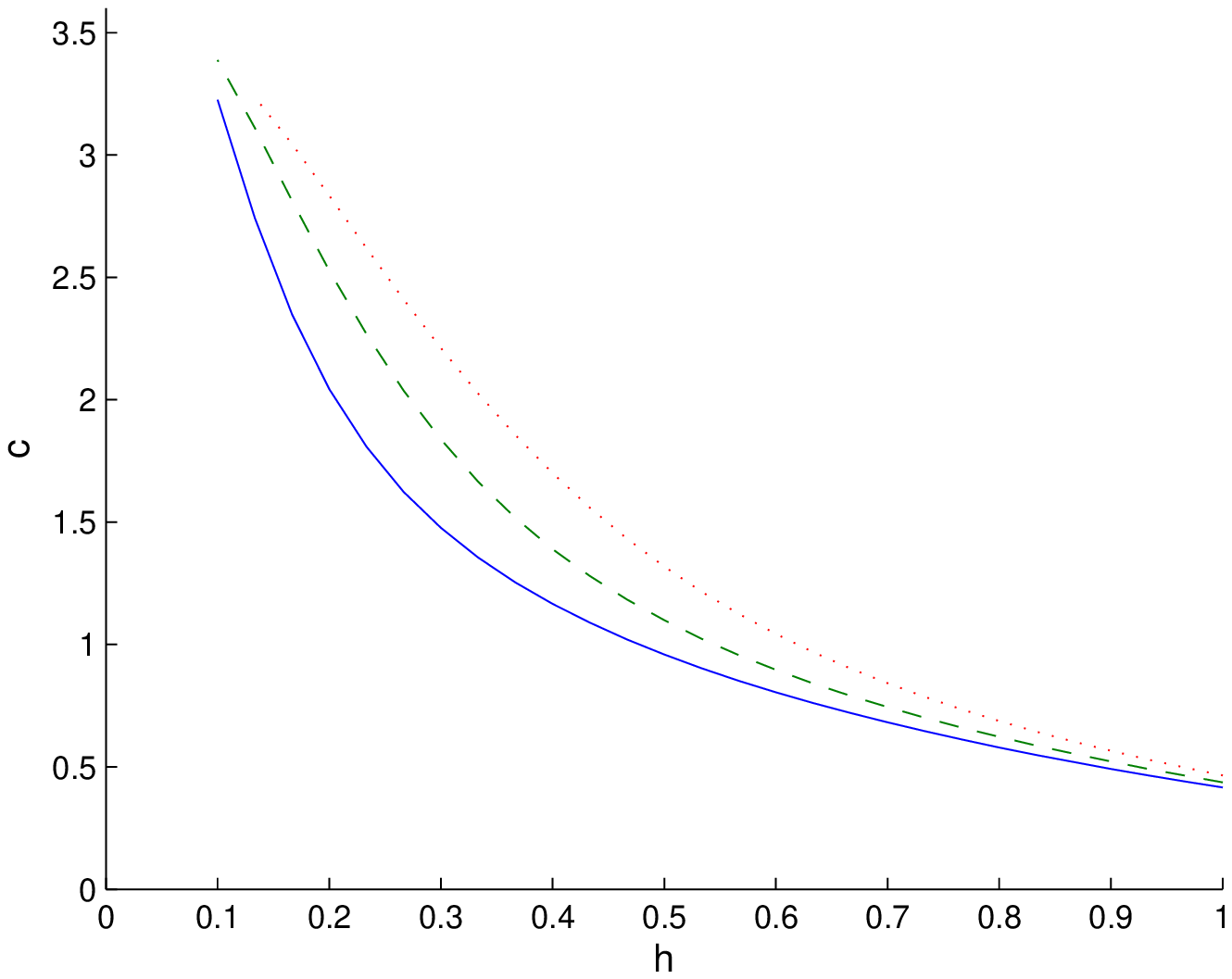}}  
\label{fig:subfig1}
}
\subfigure[]{
{\includegraphics[width=180pt]{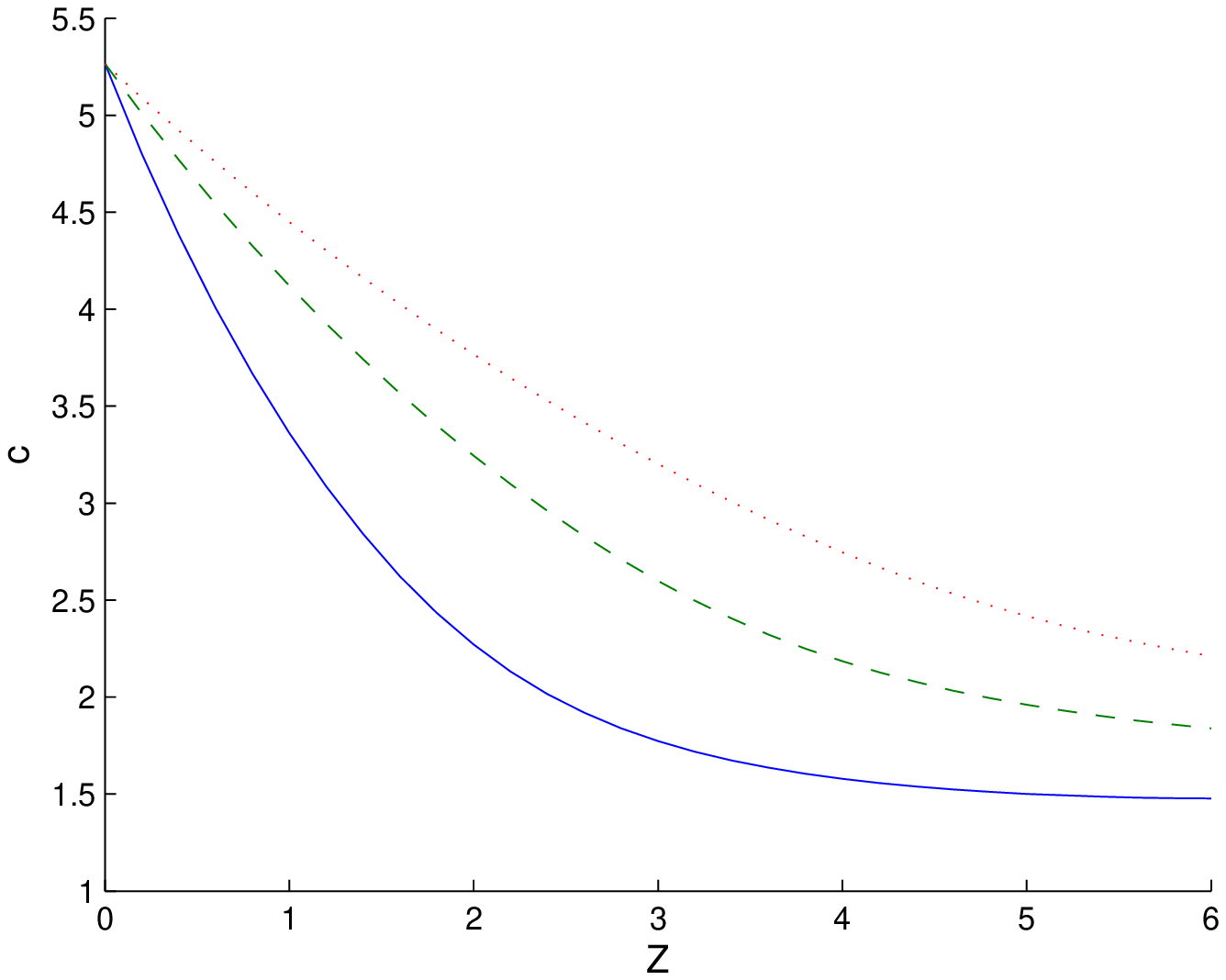}}  
\label{fig:subfigb}
}

\subfigure[]{
{\includegraphics[width=180pt]{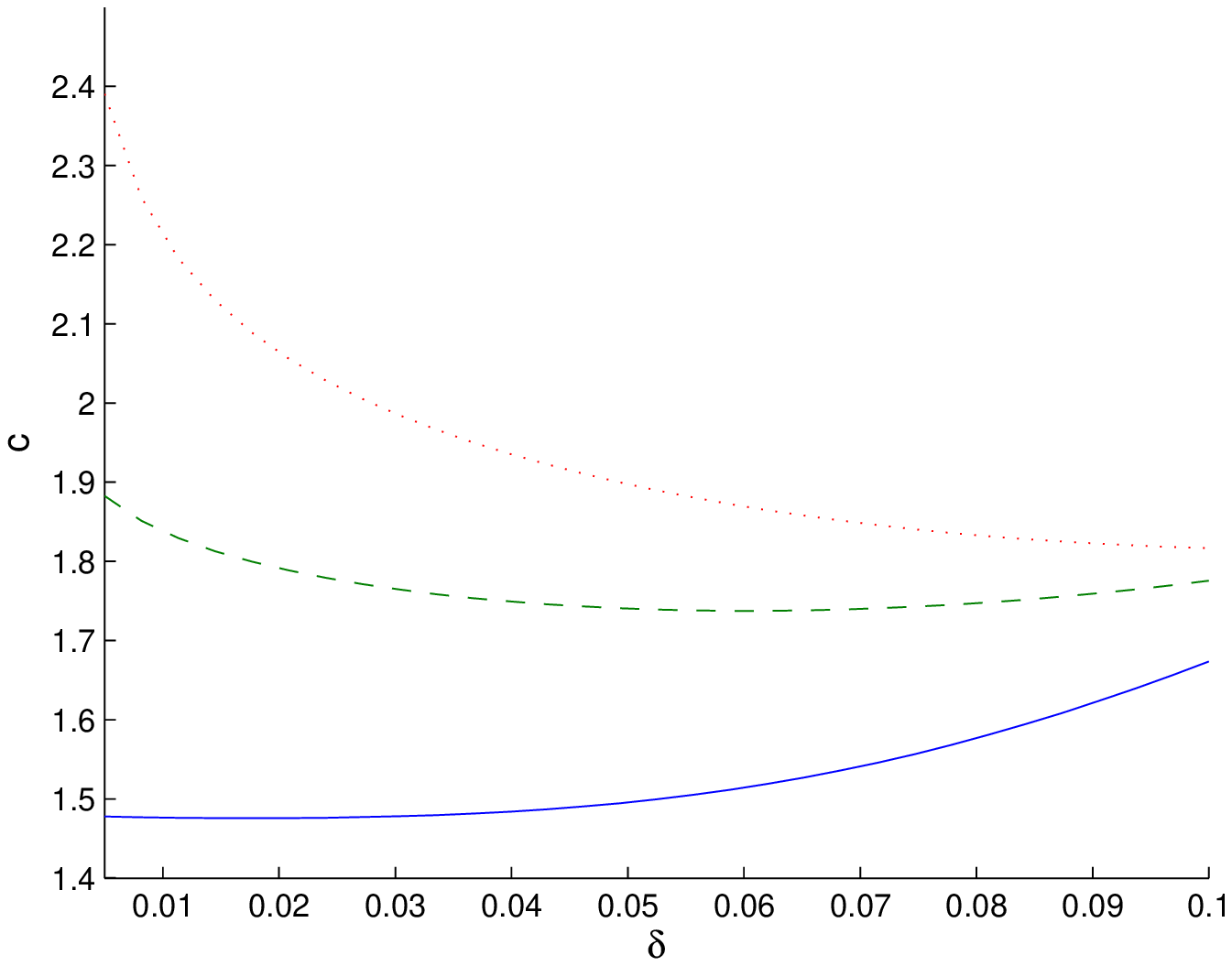}}  
\label{fig:subfigc}
}\subfigure[]{
{\includegraphics[width=180pt]{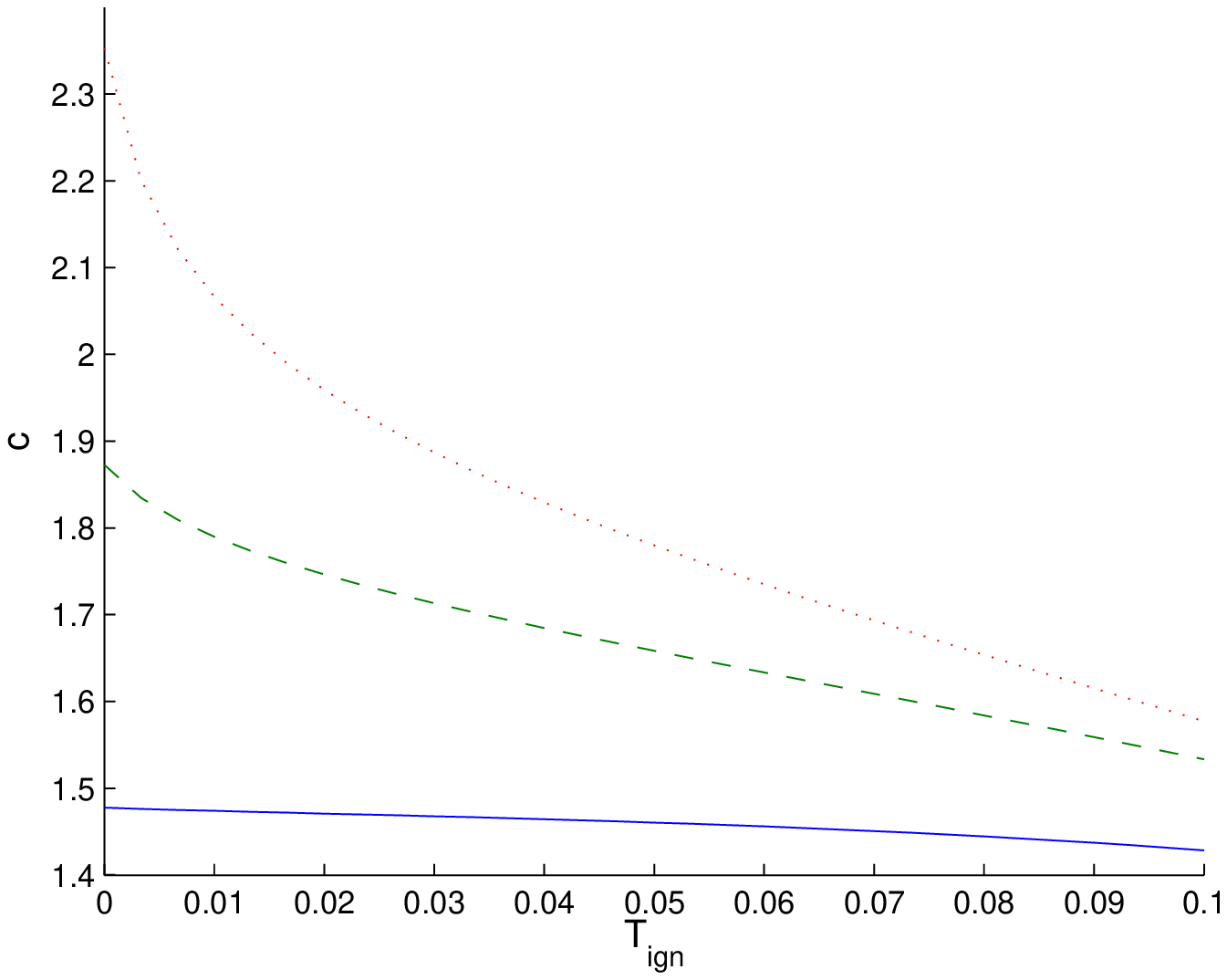}}  
\label{fig:subfigd}
}\caption{{{Plots showing the speed $c$ of the front as a function of the parameters (a) $h$ (with $Z=6$, $T_{ign}=0.003$, {and $\delta=0.01$}), (b) $Z$ (with $h=0.3$, $T_{ign}=0.003$, and $\delta=0.01$), (c) $\delta$ (with $h=0.3$, $Z=6$, and $T_{ign}=0.003$), and (d) $T_{ign}$ (with $h=0.3$, $Z=6$, {and $\delta=0.01$}). In each of the four graphs, the solid line corresponds to $\sigma=0.25$, the dashed line to $\sigma=0.5$, and the dotted line to $\sigma=0.75$. 
\label{SpeedFigure}
\vspace{-0.5cm}}}}
\end{figure}

\begin{figure}
\hspace{-0cm}
\scalebox{.73}{{\includegraphics{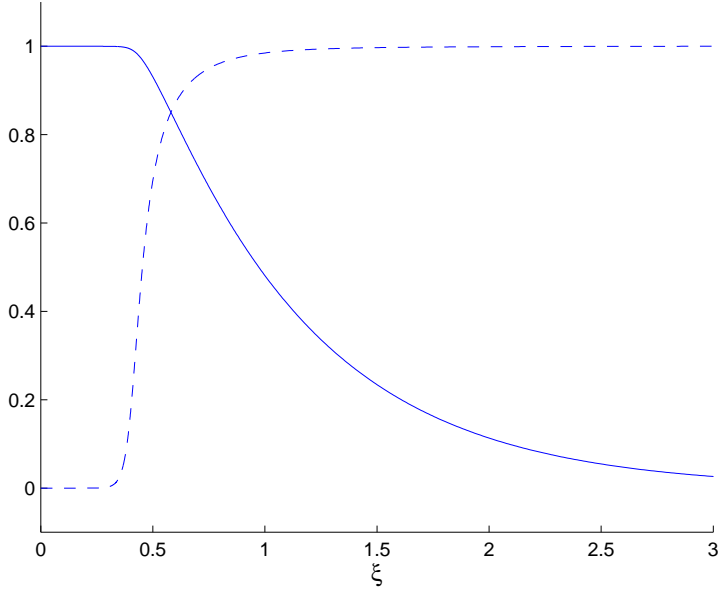}}}
\caption{\label{FigFront} Front solution of the system (\ref{front_eps0}) in the case $c=1.4684$, $h=0.3$, $Z=6$, $\sigma=0.25$, $T_{ign}=0.003$, and $\delta=0.01$. The solid line corresponds to {$u$ and the dashed line to $y$}.
}
\end{figure}

\subsection{Spectral Stability}
\label{SpSt}

We are now interested in numerically studying the point spectrum of the eigenvalue problem (\ref{EigProb}) on $L^2(\mathbb{R})$. The system (\ref{EigProb}) can be turned into a linear dynamical system 
of the form
\begin{equation}
X'=A(\xi,\lambda)\,X,
\label{linear}
\end{equation}
where $A$ is the $3\times 3$ following square matrix 
\eq{
A(\xi,\lambda)=
\left(
\begin{array}{ccccc}
0&&1&&0\\
\lambda-h\,F_w(\hw)\,\hy&&-c&& (1-h)F_w(\hw)\,\hy -F(\hw)\\
-h\,F_w(\hw)\,\hy/c&&0&& \lambda/c+(1-h)\,F_w(\hw)\,\hy/c +F(\hw)/c
\end{array}
\right).
}{A}
The asymptotic behavior as $\xi\rightarrow\infty$ of the solutions to (\ref{linear}) 
is determined by the matrix
$A^{\infty}(\lambda)=\lim_{\xi\rightarrow \infty}A(\xi,\lambda)$,
which is found by inserting the values $\hy=1$ and $\hw=0$ into (\ref{A})
\[
A^{\infty}(\lambda) = \begin{pmatrix}
0 & 1 & 0 \\
\lambda &  -c  &  0 \\
0 &  0  &  \lambda/c
\end{pmatrix}.
\]
%
For $\realpart{\lambda} > 0$, the unique eigenvalue of $A^{\infty}$ with negative real part and the corresponding eigenvector are  given by
$\mu_+ = \frac{-c - \sqrt{c^2 + 4 \lambda}}{2},\;\;v_+ = \left(1,\;\mu_+\;,0\right)^T.$
System (\ref{linear}) then has a unique solution $X_+$ satisfying \cite{codd}
$
\lim_{\xi\rightarrow \infty} X_{+}e^{-\mu_+\,\xi}=v_+.
$
In this situation where the linear system (\ref{linear}) has a one-dimensional stable manifold at $\infty$, the definition of the 
Evans function requires the adjoint system \cite{Pego92} 
\begin{equation}
Y'=-A^T(\xi,\lambda)\,Y.
\label{adjoint}
\end{equation}
The asymptotic behavior as $\xi\rightarrow-\infty$ of the solutions to (\ref{adjoint}) 
is determined by the matrix
$
A^{-\infty}(\lambda)=\lim_{\xi\rightarrow -\infty}A(\xi,\lambda)$,
which is found by inserting the values $\hy=0$ and $\hw=1$ into (\ref{A})
\[
A^{-\infty}(\lambda) =  \begin{pmatrix}
0 & 1 & 0 \\
\lambda &  -c  &  -e^{(1-h)Z} \\
0 &  0  &  (\lambda + e^{(1-h)Z})/c
\end{pmatrix}.
\]
For $\realpart{\lambda} > 0$, the unique eigenvalue of $-\left(A^{\infty}\right)^T$ with positive real part and the corresponding eigenvector are  given by
\eqnn{\mu_-=-\mu_+ = \frac{c + \sqrt{c^2 + 4 \lambda}}{2},\;\;
v_- = \left(-\lambda,\;\mu_-,\; \frac{c(c + \sqrt{c^2+4\lambda})}{c e^{Z(h-1)}(c + \sqrt{c^2+4\lambda} + 2 \lambda/c)+ 2 }\right)^T.}
System (\ref{adjoint}) then has a unique solution $Y_-$ satisfying
$\lim_{\xi\rightarrow -\infty} Y_{-}e^{-\mu_-\,\xi}=v_-.$
The Evans function $D(\lambda)$ can then be defined as \cite{Pego92}
$$
D(\lambda)=X_+(0)\cdot Y_-(0).
$$
The Evans function  is an analytic function in any region of the complex plane where $\mu_+$ is the eigenvalue of  $A^{\infty}$ that
has the smallest real part. In the region where $\mu_+$ is the unique eigenvalue with negative real part (which, in our case, is the open right half of the complex plane), the zeroes of the Evans function correspond to the point spectrum of (\ref{EigProb}) on $L^2(\mathbb{R})$. Note that the three eigenvalues of $A^{\infty}$ are  ${-c/2 \pm \sqrt{c^2 + 4 \lambda}}{/2}$ and $\lambda/c$, while the eigenvalues of $-(A^{\infty})^T$ are ${c/2 \pm \sqrt{c^2 + 4 \lambda}}{/2}$ and $-(\lambda+e^{Z(1-h)})/c$. This implies that to define a region where  the Evans function is analytic, it suffices to consider the values of $\lambda$ satisfying 
\begin{equation}
\realpart{\lambda}>-c^2/4.
\label{ancondition}
\end{equation}

To compute the Evans numerically, we choose a positive value $\xi=L$ at which the matrix given in (\ref{A}) is suitably close to its asymptotic value $A^{\infty}$. We then integrate the system (\ref{linear}) backward from $\xi=L$ in the direction of the eigenvector $v_+$ and find $X_+(0)$. In order to eliminate the exponential growth due to the eigenvalue with negative real part as we integrate from $\xi=L$, we modify the system in the following way  
$$
 X'=\left(A-\mu_+ I\right)\,X
 $$
 and use the initial condition $X(L)=v_+$. 
Similarly, we find $Y_-(0)$ by integrating the adjoint system (\ref{adjoint}).  Note that in the numerical computations, we choose the eigenvectors $v_\pm$ so that $v_+\cdot v_-=1$. This choice has the convenient consequence that $\lim_{\lambda\rightarrow \infty}D(\lambda)=1$ \cite{Pego92}.  

{In our case, since $F(w)=0$ for {$w< T_{ign}$}, the matrix $A$ in \eqref{A} reaches the constant matrix $A^{\infty}$ for a finite positive value of $\xi$. When computing $X_+$, we thus choose the value of $L$ to be the lowest value of $\xi$ satisfying $\hw(\xi)<T_{ign}$.}

Since we are interested in the zeroes of the Evans function, the standard method is to compute the integral of the logarithmic derivative of the Evans function on a given closed curve and obtain the winding number of $D(\lambda)$ along that curve. The eigenvalue problem (\ref{EigProb}) has an eigenvalue at $\lambda=0$ due to the translation invariance of the  system (\ref{system}). In order to numerically verify that there are no other zeroes of the Evans function on the right side of the complex plane, we use the bound provided by Theorem \ref{T2}. Once the quantity $h_3$ is computed numerically, one can choose a closed curve whose points satisfy (\ref{ancondition}) and whose interior encloses the  intersection of the   right side of the complex plane and the region defined by (\ref{rein126z}). For example, in the case  $h=0.3$, $Z=6$, $\sigma=0.25$, $T_{ign}=0.003$, and $\delta=0.01$ we find $h_3$ to be $144.6761$ and $c$ to be $1.4684$. Our numerical winding number computation then shows that the Evans function has no other zeroes than the one at the origin. 

We have made similar computations for other values of the parameters $h$, $Z$, $\sigma$, and $T_{ign}$. Every time, we have found the eigenvalue at the origin to be the only one. This thus strongly suggests that there is a regime in which the front solution is spectrally stable. 

As a simple example, Figure \ref{Fig3} shows the graph of the Evans function for values of $\lambda$ on the boundary of the right half of the circle of radius 3 and center -0.05  (with $h=0.3$, $Z=6$, $\sigma=0.25$, $T_{ign}=0.003$, and $\delta=0.01$). The numerical winding number is numerically computed to be 1.

\begin{figure}
\hspace{-0cm}
\scalebox{.73}{{\includegraphics{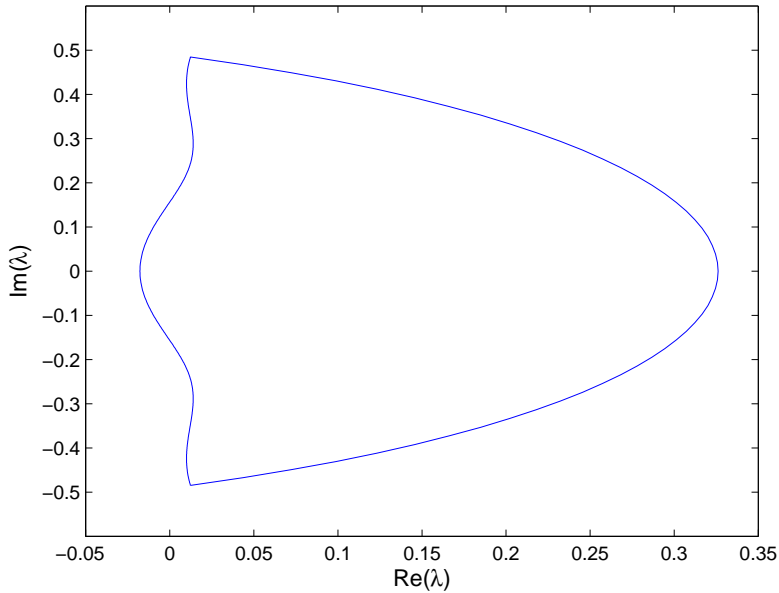}}}
\caption{\label{Fig3} Graph of the Evans function for values of $\lambda$ on the boundary of the right half of the circle of radius 3 and center -0.05  (with $h=0.3$, $Z=6$, $\sigma=0.25$, $T_{ign}=0.003$, and $\delta=0.01$). 
}
\end{figure}

\section{Interpretation of spectral information}
\label{nonl}

In this section we use the information about the spectrum obtained above to describe the nature of the instability of the front in the case when the front does not have unstable discrete spectrum, i.e.  the instability is caused by the continuous spectrum only.

Let $E_0$ denote $H^1$ or $BUC$, with norm  $\|\;\|_0$.  Let $E_\alpha$ be the corresponding weighted space with a weight  $ e^{\alpha\xi}$: $u \in E_\alpha$ if and only if $e^{\alpha\xi}u(\xi) \in E_0$, and $\|u\|_\alpha = \|e^{\alpha\xi}u(\xi)\|_0$.

An instability that can be removed by an exponential weight is called convective   \cite{Sandstede,ss00}. Indeed, if a perturbation grows while being convected  to, for example,  $- \infty$, then the  perturbation may stay bounded  or even decay in an exponentially weighted norm with a weight   that decays at  $-\infty$. 
 
A wave that is unstable in the space $E_0$ but stable in some weighted space $E_\alpha$,  in the literature is called 
 convectively   unstable.   {Convective instability of a wave  is characterized by the fact    that small perturbations to the wave  decay pointwise. }

Since the system \eqref{system} is partially parabolic, the operator  $\mathcal L$ of the linearization  of this system about the front is not sectorial and generates  not an analytic but a $C^0$-semigroup.
To interpret the absence of the unstable discrete spectrum for the full nonlinear system, one needs appropriate estimates on the semigroup. 
The semigroup estimates follow from the main theorem in \cite{GLS}. For  convenience  we make the theorem available  to the reader in Section \ref{semigroup}  of the Appendix.

It is important to  note that for the system \eqref{system}, the assumption of Theorem \ref{th:glslin} holds in the space $E_{\alpha}$ (see Lemma \ref{lem:essp}). 

In the weighted space the nonlinearity is not well defined, therefore  the semigroup estimates \eqref{semigroup} are  not sufficient to simply imply nonlinear stability in the weighted spaces.  Nevertheless, the system \eqref{system} and the front that it supports as a solution  have properties that allow to obtain nonlinear stability.  The following statement follows from Theorem 3.14 in \cite{gls10} which  is formulated in   Section \ref{ns} of the Appendix.

\begin{theorem}
\label{th:stab}
Consider the system \eqref{e:6} with  $c=c(h,Z, \gamma, \sigma)$ being the speed of the front $\widehat Y(\xi)=(\widehat u(\xi),\widehat y(\xi))$  and $h$, $Z$, $\gamma$, $\sigma$ be such that  the Evans function for the traveling wave  $\widehat Y(\xi)$  has no zeros in the half-plane $\mathrm{Re}\, \lambda \ge 0$ other than a simple zero at the origin.   Let $\alpha$  be chosen as in Lemma \ref{lem:essp}.
Suppose  the  initial conditions for \eqref{system}   be $(\widehat u(\xi)+\widetilde u^0(\xi), \widehat y(\xi)) + \widetilde y^0(\xi))$ with  $(\widetilde u^0(\xi)), \widetilde y^0(\xi)) \in E_{\alpha}^2$   and $\|(\widetilde u^0, \widetilde y^0)\|_{\alpha}$ small, and let $(u(t,\xi),y(t,\xi))$ be the solution of \eqref{system}  with $(u(0,\xi),y(0,\xi))= (\widehat u(\xi)+\widetilde u^0(\xi), \widehat y(\xi) + \widetilde y^0(\xi))$.  Then:
\begin{enumerate}
\item $(u(t,\xi), y(t,\xi))$ is defined for all $t\ge0$.
\item {$(u(t,\xi), y(t,\xi))=(\widehat u(\xi-q(t))+\widetilde u(t,\xi), \widehat y(\xi-q(t))+\widetilde y(t,\xi))$} with $(\widetilde u(t,\xi), \widetilde y(t,\xi))$ in a fixed subspace of $E_\eta^2$ complementary to the span of $(\widehat u^{\prime}, \widehat v^{\prime})$.  
\item $\|(\widetilde u(t), \widetilde y(t))\|_{0}+ |q(t)|$ is small for all $t\ge0$. 
\item $\|(\widetilde u(t), \widetilde y(t))\|_{\alpha}$ decays exponentially as $t\to\infty$.
\item There exists $q^*$ such that $|q(t)-q^*|$  decays exponentially as $t\to\infty$. 
\item There is a constant $C>0$ independent of $(\widetilde u^0, \widetilde y^0)$ such that $\|\widetilde u(t)\|_{0}\le C\|(\widetilde u^0,\widetilde y^0)\|_{\alpha}$ for all $t\ge0$.
\item  $\|\widetilde y( t)\|_{0}$ decays exponentially as $t\to\infty$.
\end{enumerate}
\end{theorem}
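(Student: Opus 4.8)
The plan is to obtain Theorem \ref{th:stab} as a direct application of Theorem 3.14 of \cite{gls10}, so that the real work reduces to checking that the system \eqref{e:6} and the front $\widehat Y=(\widehat u,\widehat y)$ satisfy the structural and spectral hypotheses of that theorem in the weighted space $E_\alpha$. I would first record that \eqref{e:6} has the required partly parabolic form: the $u$-equation carries the second-order term while the $y$-equation is a transport equation coupled to $u$ only through the smooth reaction $yF(w)$ with $F=F_\delta$ from \eqref{Fdefrc}. Consequently the coefficients of the linearization $\mathcal L$ in \eqref{oPDEf} are smooth in $\xi$ (this is precisely why the smooth $F_\delta$ was introduced) and converge, in fact exactly beyond a finite $\xi$ since $F\equiv 0$ for $w<T_{ign}$, to the constant-coefficient limits $\mathcal L^{\pm\infty}$. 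I would also isolate the structural facts already used above, namely $F_w\ge 0$ and the asymptotically stabilizing form of the non-diffusive block, which are what allow the block-triangular $C^0$-semigroup estimate of Theorem \ref{th:glslin} to apply in $E_\alpha$.

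The core of the argument is to show that, for $0<\alpha<c/2$, the spectrum of $\mathcal L$ on $E_\alpha$ lies in $\{\mathrm{Re}\,\lambda\le -\nu\}\cup\{0\}$, with $0$ a simple isolated eigenvalue. The essential-spectrum half of this is exactly Lemma \ref{lem:essp}. For the point spectrum I would use the Evans function $D(\lambda)$, which by \eqref{ancondition} is analytic on $\mathrm{Re}\,\lambda>-c^2/4$; since the weighted essential spectrum lies in $\{\mathrm{Re}\,\lambda\le -\nu\}$, in the region $\mathrm{Re}\,\lambda>-\nu$ the zeros of $D$ coincide with the eigenvalues of $\mathcal L$ on $E_\alpha$, and the winding-number computation of Section \ref{SpSt}, together with the hypothesis of the theorem, shows that the only such zero with $\mathrm{Re}\,\lambda\ge 0$ is the simple zero at the origin. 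I would then check that the zero eigenvalue survives the passage to $E_\alpha$: its eigenfunction is the front derivative $(\widehat u',\widehat y')$, and because $F$ vanishes for $w<T_{ign}$ the front reaches the constant state, so that $\widehat y'\equiv 0$ and $\widehat u'\sim e^{-c\xi}$ near $+\infty$, whence $e^{\alpha\xi}(\widehat u',\widehat y')\in E_0$ for $\alpha<c$. This yields both the spectral gap $\nu>0$ and the simple translational eigenvalue demanded by \cite{gls10}.

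With these hypotheses in hand, I would invoke Theorem 3.14 of \cite{gls10} directly. Its conclusion decomposes the solution as a phase-modulated translate of the front plus a remainder lying in a fixed complement of $\mathrm{span}(\widehat u',\widehat y')$, and supplies global existence, boundedness of the remainder and of the phase in the unweighted norm, exponential decay of the remainder in the weighted norm, convergence of the phase to an asymptotic value $q^*$, and the mixed estimate controlling $\|\widetilde u(t)\|_0$ in terms of the weighted initial data together with exponential decay of $\|\widetilde y(t)\|_0$. These are exactly items (1)--(7), so the proof closes by citation.

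I expect the main obstacle to be making rigorous the correspondence between the Evans-function zeros, which are computed for the unweighted $L^2$ problem, and the genuine point spectrum of $\mathcal L$ on $E_\alpha$. One must rule out that introducing the weight either relabels a sliver of essential spectrum as an unstable eigenvalue or conceals an embedded eigenvalue in the strip $-\nu<\mathrm{Re}\,\lambda<0$ that the unweighted essential spectrum had previously masked. The resolution rests on the analyticity of $D$ on all of $\mathrm{Re}\,\lambda>-c^2/4$ and on its weight-independence as an analytic function there, so that the single zero at the origin simultaneously governs the spectrum in $E_0$ and in $E_\alpha$; I would state this correspondence explicitly rather than leave it implicit, since it is the hinge that connects the numerical spectral computation to the hypotheses of the abstract stability theorem.
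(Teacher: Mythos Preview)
Your proposal is correct and follows essentially the same route as the paper: both obtain the theorem by verifying the hypotheses of Theorem~3.14 in \cite{gls10} and then citing it. The paper's argument is terser than yours; it simply records Lemma~\ref{lem:essp} for the essential spectrum and then checks three structural facts---that $R(u,y_-)=R(u,0)\equiv 0$, that $\partial_{\xi\xi}+c\partial_\xi$ generates a bounded semigroup on $E_0$, and that the spectrum of $c\partial_\xi-e^{(1-h)Z}$ on $E_0$ lies in $\mathrm{Re}\,\lambda\le -e^{(1-h)Z}$---which are exactly Hypotheses~\ref{hypo1} and~\ref{hypo2} of the appendix, whereas you spend more effort on the Evans-function/weighted-space correspondence and the membership of $(\widehat u',\widehat y')$ in $E_\alpha$, points the paper leaves implicit.
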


  The stability results of Theorem 3.14 from \cite{gls10} are  applicable to the  front in \eqref{e:6} because of   Lemma \ref{lem:essp}  and  the following properties of \eqref{e:6}
\begin{itemize}
\item[(i)] $R(u,y_-) =R(u,0)=\left .  \begin{pmatrix}yF(hu+(1-h)(1-y))\\-yF(hu+(1-h)(1-y))\end{pmatrix}\right|_{(u,0)}\equiv \begin{pmatrix}0\\0\end{pmatrix}$, where $R$ is the nonlinearity in \eqref{e:6} and $y_{-}=0$ is the $y$-component of the equilibrium to which the front converges at $-\infty$ (compare with Hypothesis~\ref{hypo1}).
\item[(ii)] The operator $\partial_{\xi\xi}+c\partial_\xi$ on $E_0$ generates a bounded semigroup.
\item[(iii)] The spectrum of the  operator $c\partial_{\xi} -\mathrm {e}^{(1-h)Z}$  on $E_0$ is  contained in $\mathrm{Re}\, \lambda \le - \mathrm {e}^{(1-h)Z}<0$ (compare with Hypothesis~\ref{hypo2}).
\end{itemize}
  

Under additional assumptions on the initial conditions, Theorem 3.16 in \cite{gls10}  (see Section \ref{ns}) implies some more  conclusions:   
\begin{theorem}
\label{th:stab2}
Consider the Banach space $E_0\cap L^1(\mathbb R)$ with the norm 
$\|u\|_{E_0\cap L^1(\mathbb R)}=\max\{\|u\|_{E_0},\|u\|_{L^1(\mathbb R)}\}.$  Suppose 
$(\widetilde u^0(\xi),\widetilde y^0(\xi))  \in (E_{\alpha}\cap L^1(\mathbb R))^2$  has norms  in $E_{\alpha}$ and $L^1$  which are sufficiently small, and as in Theorem \ref{th:stab}, let $(u(t,\xi),y(t,\xi))$ be the solution of \eqref{system}  with $(u(0,\xi),y(0,\xi))= (\widehat u(\xi)+\widetilde u^0(\xi), \widehat y(\xi) + \widetilde y^0(\xi))$.  Then  for all $t\ge0$:  
\begin{enumerate}
\item $(\widetilde u(t,\xi), \widetilde y(t,\xi))  \in (E_\alpha\cap L^1(\mathbb R))^2$.
\item  There is a constant $C$ independent of $(\widetilde u^0(\xi),\widetilde y^0(\xi))$ such that
\begin{eqnarray*}
&\|\widetilde y(t)\|_{L^1}\le& C\,\max\left\{\|\widetilde y^0\|_{L^1},\|(\widetilde u^0(\xi),\widetilde y^0(\xi))\|_\alpha \right\}, \\
&\|\widetilde y(t)\|_{L^\infty}\le& C\, \min\{1,t^{-\frac{1}{2}}\}\, \max\left\{\|\widetilde y^0\|_{L^1},\|(\widetilde u^0(\xi),\widetilde y^0(\xi))\|_\alpha \right\}.
\end{eqnarray*}
\item  $\|\widetilde u \|_{L^1}$ decays exponentially as $t\to\infty$.
\end{enumerate}
\end{theorem}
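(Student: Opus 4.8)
The plan is to obtain Theorem~\ref{th:stab2} exactly as Theorem~\ref{th:stab} was obtained: by verifying that the system \eqref{system} (equivalently \eqref{e:6}) satisfies the hypotheses of the relevant abstract stability theorem from \cite{gls10}---this time Theorem~3.16 rather than Theorem~3.14---and then quoting its conclusions. Theorem~3.16 is the $L^1$ refinement of Theorem~3.14: it requires the same structural and spectral hypotheses, together with the additional assumption that the initial perturbation be integrable, and in return it upgrades the $E_0$ conclusions of Theorem~\ref{th:stab} to the $L^1$ and $L^\infty$ estimates listed in the statement. Consequently almost all of the verification has already been carried out, and what remains is to check that no new hypothesis is violated when the data are placed in $(E_\alpha\cap L^1(\mathbb R))^2$.

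First I would recall that the structural hypotheses required by \cite{gls10} have already been checked for \eqref{e:6} in the discussion following Theorem~\ref{th:stab}: property (i) gives that the nonlinearity $R$ vanishes at the equilibrium $y_-=0$ reached at $-\infty$ (the analogue of Hypothesis~\ref{hypo1}); property (ii) gives that the diffusive block $\partial_{\xi\xi}+c\partial_\xi$ generates a bounded semigroup on $E_0$; and property (iii) gives the strict spectral gap $\mathrm{Re}\,\lambda\le-\mathrm e^{(1-h)Z}<0$ for the transport block $c\partial_\xi-\mathrm e^{(1-h)Z}$ (the analogue of Hypothesis~\ref{hypo2}). The spectral hypothesis is furnished by the hypothesis of the theorem itself, namely that the Evans function has no zeros in $\mathrm{Re}\,\lambda\ge0$ other than the simple zero at the origin, together with Lemma~\ref{lem:essp}, which places the essential spectrum strictly to the left of the imaginary axis in $E_\alpha$. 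Since these are precisely the hypotheses of Theorem~3.14, and since Theorem~3.16 imposes only the further integrability of the data, which is assumed, all hypotheses of Theorem~3.16 hold for the front $(\widehat u,\widehat y)$ and its conclusions apply.

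The one point requiring genuine care---and the step I expect to be the main obstacle---is the interpretation of the $L^1$ and $L^\infty$ decay rates for the non-diffusing component $\widetilde y$, which a priori solves only a transport-type equation and enjoys no smoothing of its own. The factor $\min\{1,t^{-1/2}\}$ in the $L^\infty$ bound and the exponential decay of $\|\widetilde u(t)\|_{L^1}$ are not formal consequences of the $E_0$ statements of Theorem~\ref{th:stab}; they are produced by the mechanism of \cite{gls10}, in which the parabolic smoothing of the $u$-equation and the strict dissipativity of the transport block (property (iii), with rate $\mathrm e^{(1-h)Z}$) jointly control the merely transported variable $y$ through the coupling term $F_w(\hw)\,\hy$. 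My task is therefore not to reprove these rates but to confirm that the coupling structure of \eqref{e:6}---the boundedness of $F$ and $F_w$ along the front and the vanishing of $\hy$ at $-\infty$ encoded in property (i)---matches the hypotheses under which \cite{gls10} derives them, so that the rates transfer verbatim to our front.
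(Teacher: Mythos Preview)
Your proposal is correct and matches the paper's approach exactly: the paper does not give an independent proof of Theorem~\ref{th:stab2} but simply states that, under the additional $L^1$ smallness assumption on the initial data, Theorem~3.16 of \cite{gls10} (reproduced as Theorem~\ref{th:nl2} in the Appendix) applies because the structural verifications (i)--(iii) and Lemma~\ref{lem:essp} already carried out for Theorem~\ref{th:stab} remain in force. The only additional hypothesis in Theorem~\ref{th:nl2} beyond those of Theorem~\ref{th:nl1} is that the $U$-block be genuinely parabolic (all diagonal entries of $D_1$ positive), which here is immediate since $u$ is scalar with diffusion coefficient $1$; you acknowledge this implicitly when you invoke ``the parabolic smoothing of the $u$-equation,'' so nothing is missing.
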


As we discussed in Section~\ref{reduction} system \eqref{system} could  also represent  \eqref{phys00_r}.  In that  case,  $u$ is  the scaled pressure $P$, $y$ is the concentration of the fuel $Y$, and  $T$ is a linear combination of those. Therefore Theorem \ref{th:stab} implies that small perturbations to fuel which are far from the unburned state decay exponentially fast, while perturbations to the pressure and temperature stay bounded or, under additional assumptions (Theorem \ref{th:stab2}),  decay at algebraic rates.  These results have clear physical meanings: if some fuel is added in the area where the temperature is high, the fuel will burn out; if  a cold or hot spot is introduced where there is no fuel left, the temperature will diffuse at algebraic rates as expected from the heat equation. 
In the original system  \eqref{system},  $u$ represents a linear combination of the pressure and temperature, and $y$ represents the concentration of the  fuel, so a similar interpretation makes physical sense.

\section{Conclusions and extension of the results}
\label{Conc}

In this article, we have considered front solutions to the model \eqref{phys00}. The fronts were seen as solutions of a reduced system 
\eqref{system} obtained by from \eqref{phys00} by implementing the scaling \eqref{ls0} and requiring the initial conditions \eqref{incond2}. We have used a combination of energy estimates computations and numerics based on the Evans functions to show that there is a parameter regime for which there are no unstable eigenvalues. We then used recently obtained results about partially parabolic systems and about fronts with marginal spectrum to prove the nonlinear stability of the fronts in the absence of unstable spectrum. Our results are concerned with the case where the nonlinearity $F$ takes the form \eqref{Fdefrc} with $\delta$ very small, i.e. very close to the discontinuous version \eqref{Fdefr}. 

We now show that our result extends to the general case where the condition \eqref{incond2} does not necessarily hold, i.e. it applies to the fronts as solutions of \eqref{phys00}. 
In general, in the case $\epsilon=0$, with the scaling \eqref{ls0}, the system  \eqref{phys00} becomes
\eq{
m_t&=0,\\
u_t&= u_{xx}+yF(m+hu+(1-h)(m-y)),  \\
y_t&= \bar \eps y_{xx} -yF(m+hu+(1-h) (m-y)), 
}{rs}
where $m\equiv y+v$. Under the initial condition \eqref{incond2}, we have that $m=1$  and the system \eqref{rs} becomes the reduced system \eqref{system}. To show
that our stability results concerning  the reduced system carry over to the general case \eqref{rs}, 
 we first write the eigenvalue problem arising from the linearization of \eqref{rs} about the front solutions
\begin{equation}
\label{EigProbr}
\begin{aligned}
\lambda r &=c r_\xi,\\
\lambda p &= p_{\xi\xi} + c p_\xi + F_w(\hw)\,\hy\,(hp-(1-h)(q-r))+F(\hw)q, \\
\lambda q &= c q_\xi-F_w(\hw)\,\hy\, (hp-(1-h)q)-F(\hw)(q-r),
\end{aligned}
\end{equation}
where we have used the fact that, in view of the asymptotic values for the front \eqref{bc}, we have  $\widehat{m}\equiv \hy+\hv=1$.
The equation above  generalizes \eqref{EigProb} obtained for the reduced system. Now, if $\lambda$ is in the point spectrum, the first equation 
of  \eqref{EigProbr} implies that $r=0$, thus reducing the problem to the one we studied for the reduced system. Therefore, the point spectra of \eqref{EigProb} and \eqref{EigProbr} are the same.
As for the essential spectrum, if we apply the procedure described in Section \ref{Ess} to  \eqref{EigProbr}, it is easy to see that the spectrum is unchanged with the exception that the presence of the equation for $r$ adds one more covering of the imaginary axis. 
 
The nonlinear stability results similar to ones  described in Theorems~\ref{th:stab} and \ref{th:stab2} can be obtained from   Theorems~\ref{th:nl1} and  \ref{th:nl2},  since in the diagonal  operator 
$$L_1=\begin{pmatrix}c\partial_{\xi}&0 \\ 0&D_1 \partial_{\xi\xi}+c\partial_\xi  \end{pmatrix}$$
 generates a bounded semigroup.  Indeed, the operator $c\partial_{\xi}$ generates the semigroup $P(t)s(\xi) = s(\xi+ct)$, which is  a $C^0$-semigroup. The operator $\partial_{\xi\xi} +c\partial_{\xi} $  on $E_0$ is sectorial (see \cite{Henry81}, pp. 136-137, and \cite{Pazy}, Section 3.2, Corollary 2.3) and hence generates an analytic semigroup (\cite{Henry81}, Theorem 1.3.4). 
 
 The behavior of the perturbation $\widetilde m$ to $\widehat m$ mimics the behavior of the perturbation $\widetilde u$ to the $\widehat u$ component of the front: $\|\widetilde m \|_0$ is bounded by a constant defined by the norm of the initial perturbations to the front and $\|\widetilde m \|_{\alpha}$ converges exponentially to 0 as $t\to \infty$. Taking into account exponential convergence of the $\widetilde y$-component to $0$ in both norms, we can conclude that $\widetilde v=\widetilde m-\widetilde y$ which is  the perturbation to the pressure component in the model stays bounded (and controlled by the size of the initial conditions)  in the $\|\cdot  \|_0$-norm and converges  to $0$ in $\|\cdot \|_{\alpha}$-norm.



\section{Appendix}

\subsection{Semigroup Estimates}\label{semigroup}

For our system that estimate follows from the following theorem \cite{GLS}.
Let the linearization of a PDE system about a wave (a front or a pulse) traveling with velocity $c$  have the form
$${
\begin{split}
\partial_tU&=D\partial_{\xi\xi}U+(\widetilde A +diag(c,\ldots,c))\partial_\xi U+B_{11}(\xi)U+B_{12}(\xi)V,  \\
\partial_tV&=c\partial_\xi V+B_{21}(\xi)U+B_{22}(\xi)V, 
\end{split}
}$$
where $U$ is  a $k\times 1$  vector,  $V$  a $(n-k)\times 1$ vector,   $D$ and  $\widetilde A$  are constant matrices,  
$D=\mathrm{diag}(d_1,\ldots,d_k)$, 
all $d_i>0$, $1 \le k\leq n$,  and 
 each $B_{ij}(\xi)$ exponentially approaches a constant matrix  $B_{ij}^\pm$ as $\xi\to\pm\infty$. Consider the linear operator $\mathcal L$ associated with the differential expression
$$
 \mathcal L=\begin{pmatrix}
 D\partial_{\xi\xi}+(\widetilde A +diag(c,\ldots,c))\partial_\xi+B_{11}&B_{12}\\B_{21}&c\partial_\xi+B_{22}
 \end{pmatrix}
$$
on one of the Banach spaces $L^2(\mathbb R)^n$, $H^1(\mathbb R)^n$, $L^1(\mathbb R)^n$, or $BUC(\mathbb R)^n$.
  
\begin{theorem} \label{th:glslin} \cite{GLS}
Suppose the spectrum of $\mathcal L$ is contained in $\mathrm {Re}\,\lambda\le-\nu$, $\nu>0$, except for an eigenvalue at $0$ of finite algebraic multiplicity.  Let $\mathcal{P}_0$ be the Riesz spectral projection for  $\mathcal L$  whose kernel is equal to  the generalized eigenspace for the $0$ eigenvalue, and let $0<\delta<\nu$. 
 Then there is a number $K>0$ such that  
$$\|e^{t\,\mathcal L\, \mathcal P_0} \|  \le Ke^{-\delta t}.$$
\end{theorem}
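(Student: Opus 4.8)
The operator $\mathcal{L}$ is not sectorial---the lower block $c\partial_\xi+B_{22}$ is purely first order---so $\mathcal{L}$ generates only a $C^0$-semigroup, and the naive implication ``spectrum in $\{\mathrm{Re}\,\lambda\le-\nu\}$ $\Rightarrow$ decay at rate $\nu$'' is false in general. The whole difficulty is therefore to upgrade the spectral hypothesis to a semigroup bound. The plan is first to split the space using the Riesz projection $\mathcal{P}_0$, which commutes with the semigroup; it then suffices to bound $e^{t\mathcal{L}}$ on $\mathrm{Ran}\,\mathcal{P}_0$, where the restricted generator has its \emph{entire} spectrum in $\{\mathrm{Re}\,\lambda\le-\nu\}$.

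Second, I would control the essential growth bound $\omega_{\mathrm{ess}}$ of the semigroup by comparison with the asymptotic constant-coefficient operators. Write $\mathcal{L}=\mathcal{L}_0+\mathcal{B}$, where $\mathcal{L}_0=\mathrm{diag}\!\big(D\partial_{\xi\xi}+(\widetilde A+cI)\partial_\xi,\;c\partial_\xi\big)$ is block-diagonal and constant-coefficient and $\mathcal{B}$ is the bounded multiplication operator collecting the $B_{ij}(\xi)$, whose coefficients converge exponentially to the constants $B_{ij}^\pm$ at $\pm\infty$. Here $\mathcal{L}_0$ generates a bounded $C^0$-semigroup (analytic on the parabolic block, an isometric shift on the transport block). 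Because the coefficients converge exponentially to constants, the essential spectrum of $\mathcal{L}$ coincides with the spectra of the two frozen-coefficient operators $\mathcal{L}^\pm$, which is the content of Henry's Theorem~A.2 already invoked in Section~\ref{Ess} \cite{Henry81}. Each $\mathcal{L}^\pm$ is translation invariant, hence diagonalizable by the Fourier transform, so the spectral mapping theorem holds for $e^{t\mathcal{L}^\pm}$ and their growth bounds equal their spectral bounds. Since the hypothesis forces $\mathrm{spec}(\mathcal{L}^\pm)\subseteq\{\mathrm{Re}\,\lambda\le-\nu\}$---in particular the Fourier symbol of the transport block, $\ri c\varsigma+B_{22}^\pm$, must have eigenvalues of real part $\le-\nu$ for every $\varsigma\in\mathbb{R}$---the frozen semigroups decay at rate $\nu$. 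A Weyl-type argument (the difference between $e^{t\mathcal{L}}$ and an interpolation of $e^{t\mathcal{L}^\pm}$ glued by a smooth partition of unity is compact, up to exponentially small errors from the coefficient convergence) then yields $\omega_{\mathrm{ess}}(e^{t\mathcal{L}})\le-\nu$.

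Third, I would invoke the decomposition $\omega_0=\max\{\omega_{\mathrm{ess}},\,s(\mathcal{L})\}$ valid for $C^0$-semigroups (Engel--Nagel), together with the fact that any spectrum of $\mathcal{L}$ with $\mathrm{Re}\,\lambda>\omega_{\mathrm{ess}}$ consists of finitely many eigenvalues of finite algebraic multiplicity that obey the spectral mapping theorem. On $\mathrm{Ran}\,\mathcal{P}_0$ these peripheral eigenvalues are absent (the only one with $\mathrm{Re}\,\lambda>-\nu$ was the simple zero, removed by $\mathcal{P}_0$), so the restricted spectral bound is $\le-\nu$; combined with $\omega_{\mathrm{ess}}\le-\nu$ this gives growth bound $\le-\nu<-\delta$, that is $\|e^{t\mathcal{L}}\mathcal{P}_0\|\le Ke^{-\delta t}$. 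In the Hilbert space $L^2(\mathbb{R})^n$ I would instead present the cleaner route: verify the uniform resolvent bound $\sup_{\mathrm{Re}\,\lambda\ge-\delta}\|(\mathcal{L}-\lambda)^{-1}\mathcal{P}_0\|<\infty$---finite on compact $\lambda$-sets by the spectral hypothesis, and uniform as $|\mathrm{Im}\,\lambda|\to\infty$ by the explicit high-frequency resolvent estimates for $\mathcal{L}^\pm$ (second-order decay in the parabolic block, transport-type decay in the $V$ block)---and conclude by the Gearhart--Pr\"uss theorem.

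I expect the main obstacle to be precisely the essential-growth-bound step on the non-Hilbert spaces $L^1$, $BUC$, and $H^1$, where Gearhart--Pr\"uss is unavailable: one must show by hand that the hyperbolic transport block does not spoil the decay, i.e. that the coupling through $B_{12},B_{21}$ together with the parabolic smoothing keeps the essential part of the semigroup decaying at rate $\nu$ despite the total absence of smoothing in the $V$-direction. The exponential convergence of the $B_{ij}(\xi)$ to $B^\pm$ is what renders the comparison with the frozen operators a compact perturbation, and it is the technical linchpin of this step.
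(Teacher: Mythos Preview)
The paper does not actually prove this theorem; it cites \cite{GLS} and remarks only that ``The proof of Theorem~\ref{th:glslin} is based on the Gearhart--Pr\"uss or Greiner Spectral Mapping Theorem, for dealing with Hilbert space and Banach space, respectively.'' So the comparison is between your outline and that one-line indication.

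Your Hilbert-space route---Riesz projection to remove the zero eigenvalue, then a uniform resolvent bound on $\{\mathrm{Re}\,\lambda\ge -\delta\}$ obtained from the Fourier symbols of $\mathcal{L}^\pm$ at high frequency, then Gearhart--Pr\"uss---is exactly the mechanism the paper names for $L^2(\mathbb{R})^n$, and the high-frequency estimate you describe is the right ingredient.

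For the Banach spaces $L^1$, $H^1$, $BUC$ you take a different path from the paper: you aim at the essential growth bound $\omega_{\mathrm{ess}}$ via a Weyl-type comparison with the frozen operators $\mathcal{L}^\pm$, then invoke $\omega_0=\max\{\omega_{\mathrm{ess}},s(\mathcal{L})\}$. The paper instead points to the Greiner Spectral Mapping Theorem, i.e.\ to a direct spectral-mapping statement for the semigroup rather than to measure-of-noncompactness bookkeeping. The distinction matters because your key compactness claim is problematic in the hyperbolic block. For the scalar model $c\partial_\xi+b(\xi)$ the semigroup is $(T(t)f)(\xi)=f(\xi+ct)\exp\!\big(\tfrac{1}{c}\int_\xi^{\xi+ct}b\big)$, and the difference between $T(t)$ and the corresponding $T^\pm(t)$ (or a partition-of-unity gluing thereof) is a translation composed with multiplication by a bounded, spatially localized function. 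Such an operator is \emph{not} compact on $L^1$, $L^2$, or $BUC$: there is no smoothing to convert decay of the multiplier into compactness, so the Weyl argument as you state it does not go through. You correctly flag this block as ``the main obstacle,'' but the resolution you propose (compactness up to exponentially small errors) does not hold; exponentially small coefficient tails give a small-norm perturbation, not a compact one, and small-norm alone cannot push $\omega_{\mathrm{ess}}$ all the way down to $-\nu$.

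In \cite{GLS} this is circumvented by establishing the spectral mapping property directly (Gearhart--Pr\"uss on $L^2$, Greiner-type arguments on the other spaces), which uses uniform resolvent estimates along vertical lines rather than compact perturbations of the semigroup. If you want to keep your essential-growth-bound strategy, the honest fix is to handle the $V$-block by the explicit formula above---which already gives $\|T(t)\|\le Ce^{-\nu t}$ once $\mathrm{Re}\,\mathrm{spec}\,B_{22}^\pm\le-\nu$, since the integral $\int_\xi^{\xi+ct}b$ is dominated by the asymptotic values uniformly in $\xi$ for large $t$---and reserve the compactness argument for the genuinely parabolic $U$-block and the off-diagonal coupling, where parabolic smoothing makes it work.
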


The proof of Theorem \ref{th:glslin} is based on the Gearhart-Pr\"uss  or Greiner Spectral Mapping Theorem, for dealing with Hilbert space and Banach space, respectively.

\subsection{Nonlinear Stability}\label{ns}

The following theorem  is formulated and proved in \cite{gls10}.
Consider the system 
\begin{equation}
\label{geneq}
Y_t =DY_{xx}+ R(Y), 
\end{equation}
with $x \in \mathbb R$, $t \ge 0$, and $Y: \mathbb R\times\mathbb R_+\to \mathbb R^n$. The function $R: \mathbb R^n\to\mathbb R^n$ is smooth, $D=\mbox{diag}(d_i)$ is an $n \times n$ constant diagonal matrix with $d_i > 0$ for $i=1,\ldots,k$, where $k$ a number between 1 and $n$, and $d_i = 0$ otherwise.  
Let  
\begin{equation}
\label{movingeq}
Y_t =DY_{\xi\xi}+cY_\xi+ R(Y), 
\end{equation}
be the system that has  as a solution a front or a pulse  $Y^*$  that moves  with the velocity $c$ and  that asymptotically connects the equilibrium $Y_-$ at $-\infty$ to the equilibrium $Y_+$ at $+\infty$.  Without loss of generality we take $Y_-$ to be a zero vector.  

 Assume that,  in \eqref{movingeq},  $Y$ can be written as $Y=(U,V) \in \mathbb R^{n}$,   where $U\in \mathbb R^k$,  $V\in \mathbb R^{n-k}$,  and the splitting is such that    there exists a $ k \times k$ matrix $A_1$ such that $R(U, 0) = (A_1 U, 0)$, 
 and the equation \eqref{geneq} has the form 
\begin{align}
U_t &= D_1 U_{\xi\xi}+cU_{\xi}+R_1(U,V), 
\notag
 \\
V_t &= D_2 V_{\xi\xi}+cV_{x}+R_2(U,V), 
\notag
\end{align}
where $D_1$ and $D_2$ nonnegative diagonal matrices, and $R_1$ and $R_2$ are continuously differentiable maps.
 The linearization of  \eqref{movingeq}  then  at $Y_-=(0,0)$, reads
\begin{align*}
U_t &=L^{(1)}U +D_VR_1(0,0)V:= D_1 U_{\xi\xi}+cU_\xi+A_1U +D_VR_1(0,0)V, \\
V_t &=L^{(2)}V:= D_2 V_{\xi\xi}+cV_\xi+D_VR_2(0,0)V,
\end{align*}
Additional  assumptions can be formulated as: 
\begin{hypothesis}\label{hypo1}
Assume that the traveling wave  $Y_*$  is spectrally stable in an exponentially weighted space  $E_\alpha$.  
\end{hypothesis}

\begin{hypothesis}\label{hypo2}
Assume that  
(i)   the operator $L^{(1)}$ on $E_0$ generates a bounded
semigroup, and 
(ii) the operator  associated with $L^{(2)}$ on $E_0$ has its spectrum in the half-plane  $\mathrm{Re}\,\lambda \le -\nu$ for some $\nu>0$.   
\end{hypothesis}
The following theorem then  holds.

\begin{theorem} \label{th:nl1}
Assume Hypotheses \ref{hypo1} and \ref{hypo2}.  Then the wave $Y_*$ is nonlinearly convectively unstable, with $\alpha$ given by Hypothesis \ref{hypo1}.  Thus if the perturbation   $(\widetilde U(\cdot,0),\widetilde V(\cdot,0))$ of the traveling wave is initially small in $E_0 \cap E_\alpha$, then the corresponding solution of \eqref{movingeq} decays exponentially in $E_\alpha$ as $t\to\infty$ to a particular shift of the wave.  The solution can be written  as
$$
(U,V)(\xi,t)=(U_*(\xi+ q(t))+\widetilde U(\xi,t),V_*(\xi+q(t))+\widetilde V(\xi,t))
$$
where, for each $t$,  the function $(\widetilde U(\cdot,t),\widetilde V(\cdot,t))$ belongs to a fixed subspace of $E_0 \cap E_\alpha$ complementary to the the span of $Y_*^\prime(\cdot)$. $\widetilde U(\cdot,t)$ stays small in $E_0$, while $\widetilde V(\cdot,t)$ decays exponentially in $E_0$ as $t\to\infty$, and  the function $q(t)$ converges exponentially to a finite limit as $t\to \infty$. 
\end{theorem}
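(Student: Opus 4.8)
The plan is to follow the standard modulation-plus-Duhamel route to orbital stability, but carried out \emph{simultaneously} in the unweighted space $E_0$ and the weighted space $E_\alpha$, keeping the parabolic $U$-block and the non-diffusing $V$-block on separate footings throughout (this separation is forced on us because the semigroup generated by $\mathcal L$ is only $C^0$, so there is no parabolic smoothing to fall back on for $V$). First I would insert the modulated ansatz $(U,V)(\xi,t)=(U_*(\xi+q(t))+\widetilde U(\xi,t),\,V_*(\xi+q(t))+\widetilde V(\xi,t))$ into \eqref{movingeq} and collect terms. Writing $\widetilde Y=(\widetilde U,\widetilde V)$, this gives a perturbation equation of the schematic form $\widetilde Y_t=\mathcal L\widetilde Y+\dot q(t)\,(Y_*'+\widetilde Y_\xi)+N(\widetilde Y)$, where $\mathcal L$ is the operator linearized about $Y_*$, the $\dot q$ term comes from differentiating the shift, and $N(\widetilde Y)=R(Y_*+\widetilde Y)-R(Y_*)-DR(Y_*)\widetilde Y$ is the quadratically small Taylor remainder. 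To peel off the neutral translation mode I would require, for every $t$, that $\widetilde Y(\cdot,t)\in\mathrm{Range}(\mathcal P_0)$, i.e. that the spectral projection of $\widetilde Y$ onto the generalized kernel of $\mathcal L$ vanish; pairing the perturbation equation against the adjoint zero-eigenfunction $\psi$ (with $\langle\psi,Y_*'\rangle\neq0$ by simplicity of the $0$ eigenvalue) then yields a scalar ODE $\dot q\,\langle\psi,Y_*'+\widetilde Y_\xi\rangle=-\langle\psi,N(\widetilde Y)\rangle$, whose right-hand side is quadratically small. This simultaneously pins down the complementary subspace asserted in the statement and slaves $q$ to $\widetilde Y$.

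Next I would assemble the linear estimates. On $E_\alpha$, Hypothesis \ref{hypo1} together with Theorem \ref{th:glslin} furnishes the exponential bound $\|e^{t\mathcal L}\mathcal P_0\|_\alpha\le K e^{-\delta t}$ for the semigroup restricted to the complementary subspace. On $E_0$ I would use Hypothesis \ref{hypo2}: part (ii) gives that the $V$-block semigroup $e^{tL^{(2)}}$ decays like $e^{-\nu t}$, while part (i) gives only that the $U$-block semigroup $e^{tL^{(1)}}$ is uniformly bounded (no decay, consistent with the marginal essential spectrum of the diffusing component). Rewriting the perturbation equation in Duhamel form, $\widetilde Y(t)=e^{t\mathcal L}\mathcal P_0\widetilde Y(0)+\int_0^t e^{(t-s)\mathcal L}\mathcal P_0\big(\dot q(s)\,(Y_*'+\widetilde Y_\xi)+N(\widetilde Y(s))\big)\,ds$, I would set up a contraction (or a continuation/bootstrap a priori argument) on the trajectory space with the combined norm $\sup_{t\ge0}\big(e^{\delta t}\|\widetilde Y(t)\|_\alpha\big)+\sup_{t\ge0}\|\widetilde U(t)\|_0+\sup_{t\ge0}\big(e^{\nu t}\|\widetilde V(t)\|_0\big)$, together with a bound on $\dot q$, for data small in $E_0\cap E_\alpha$.

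The crux, and the step I expect to be the main obstacle, is controlling $N(\widetilde Y)$ in the \emph{weighted} norm, because a priori a quadratic term such as $\widetilde U^2$ satisfies $\|e^{\alpha\xi}\widetilde U^2\|_0=\|e^{-\alpha\xi}(e^{\alpha\xi}\widetilde U)^2\|_0$, and the factor $e^{-\alpha\xi}$ blows up as $\xi\to-\infty$, so the nonlinearity is genuinely not well defined on $E_\alpha$. This is resolved by the structural hypothesis $R(U,0)=(A_1U,0)$: since $R(\cdot,0)$ is $U$-linear, all $U$-derivatives of $R$ of order $\ge2$ vanish on $\{V=0\}$, so by the fundamental theorem of calculus each such coefficient, evaluated at $Y_*$, is proportional to $V_*(\xi+q)$; likewise the factorization $R(U,V)-R(U,0)=\big(\int_0^1 D_VR(U,sV)\,ds\big)V$ shows all remaining terms of $N$ carry an explicit $\widetilde V$. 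Thus every genuinely nonlinear contribution carries a $V$-type factor, either $V_*(\xi+q)$, which tends to $0$ exponentially as $\xi\to-\infty$ because $Y_-=(0,0)$, or $\widetilde V$; that exponentially decaying factor absorbs the offending $e^{-\alpha\xi}$ and renders $N$ Lipschitz on bounded sets of $E_0\cap E_\alpha$ with a quadratically small bound. With this in hand the fixed-point map is a contraction for sufficiently small data.

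The remaining conclusions then follow by reading off the three pieces of the combined norm. Exponential decay of $\|\widetilde Y\|_\alpha$ is the weighted-space assertion; boundedness of $\|\widetilde U(t)\|_0$ follows from the merely bounded $U$-semigroup $e^{tL^{(1)}}$ driven, through the Duhamel term, by a forcing that carries a $V$-factor and is therefore time-integrable; and exponential decay of $\|\widetilde V(t)\|_0$ follows from $e^{tL^{(2)}}$ together with the same integrable forcing. Finally, integrating the $\dot q$ ODE, whose right-hand side is exponentially small once $\|\widetilde Y\|_\alpha$ decays, shows $q(t)$ converges exponentially to a finite limit $q^*$, which identifies the particular shift of the wave to which the solution relaxes.
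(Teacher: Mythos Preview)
The paper does not actually prove Theorem~\ref{th:nl1}; it is quoted verbatim in the Appendix as Theorem~3.14 of \cite{gls10}, and the present paper merely \emph{applies} it (verifying the hypotheses for the combustion system \eqref{e:6}) to obtain Theorems~\ref{th:stab} and~\ref{th:stab2}. There is therefore no proof in this paper to compare your proposal against.

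That said, your sketch is a faithful outline of the argument that is carried out in \cite{gls10}. You correctly identify the central difficulty --- that the nonlinearity is not a priori well defined on $E_\alpha$ because of the blow-up of $e^{-\alpha\xi}$ as $\xi\to-\infty$ --- and the correct structural remedy: the hypothesis $R(U,0)=(A_1U,0)$ forces every genuinely nonlinear term in the Taylor remainder $N(\widetilde Y)$ to carry at least one factor of $V_*(\xi+q)$ or $\widetilde V$, whose exponential decay at $-\infty$ absorbs the bad weight. You also correctly separate the $U$- and $V$-blocks in the $E_0$ estimates (bounded semigroup for $U$ via Hypothesis~\ref{hypo2}(i), exponentially decaying semigroup for $V$ via Hypothesis~\ref{hypo2}(ii)), and invoke Theorem~\ref{th:glslin} for the decay in $E_\alpha$ on the complement of the zero eigenspace. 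One minor point: in the $E_0$ argument you should linearize at the end state $Y_-=(0,0)$ rather than at $Y_*(\xi)$ when isolating $L^{(1)}$ and $L^{(2)}$, treating the discrepancy $DR(Y_*(\xi))-DR(0)$ as an additional forcing term; this discrepancy again carries exponential decay at $-\infty$ (because $Y_*\to Y_-$) and so feeds harmlessly into the Duhamel integral. With that adjustment your contraction norm and bootstrap scheme are exactly what is used.
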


\begin{theorem} \label{th:nl2}
In addition to the assumptions in Theorem \ref{th:nl1}, let us suppose that the linear equation $U_t=L^{(1)}U$ is parabolic, i.e., the diagonal entries of $D_1$ are all positive.   If the initial perturbation of the traveling wave is also small in $L^1$-norm, then $\widetilde U(\cdot,t)$ stays small in the $L^1$-norm and decays like $t^{-\frac{1}{2}}$ in the $L^\infty$-norm
as $t\to\infty$.
\end{theorem}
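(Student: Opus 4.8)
The plan is to take the conclusions of Theorem~\ref{th:nl1} as given---global existence, the decomposition
\[
Y(\xi,t)=\big(U_*(\xi+q(t))+\widetilde U(\xi,t),\,V_*(\xi+q(t))+\widetilde V(\xi,t)\big),
\]
the smallness of $\widetilde U$ in $E_0$, and the exponential decay in $E_0$ of $\widetilde V$, in $E_\alpha$ of the whole perturbation, and of $\dot q(t)$---and to extract the refined Lebesgue behaviour of the diffusive component $\widetilde U$ purely from the parabolicity of $L^{(1)}$. First I would record the two one-dimensional kernel estimates for the parabolic generator $L^{(1)}=D_1\partial_{\xi\xi}+c\partial_\xi+A_1$, namely $\|e^{tL^{(1)}}\|_{L^1\to L^1}\le C$ and $\|e^{tL^{(1)}}\|_{L^1\to L^\infty}\le C\,t^{-1/2}$; these hold because $D_1>0$ (Gaussian heat kernel with drift). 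Applied to the free term $e^{tL^{(1)}}\widetilde U^0$ they already deliver the asserted conclusions, so everything reduces to showing that the remaining terms in the Duhamel formula
\[
\widetilde U(t)=e^{tL^{(1)}}\widetilde U^0+\int_0^t e^{(t-s)L^{(1)}}\,G(s)\,\rd s
\]
are of strictly lower order.

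The structural hypothesis $R_1(U,0)=A_1U$ is what makes this possible: by Hadamard's lemma $R_1(U,V)=A_1U+H_1(U,V)V$ with $H_1$ smooth, so every term of the forcing $G$ beyond the generator carries a factor of $V_*$ or $\widetilde V$, except for the spatially localized linear correction $(B_{11}(\xi)-A_1)\widetilde U$ and the phase term $\dot q(t)\,\partial_\xi U_*$. I would first upgrade the exponential $E_0$-decay of $\widetilde V$ to exponential $L^1$-decay, using the $L^1$-smallness of the data and Duhamel for $\widetilde V$ with $e^{tL^{(2)}}$, which decays exponentially on $L^1$ because $\mathrm{spec}\,L^{(2)}\subset\{\mathrm{Re}\,\lambda\le-\nu\}$. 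With $\|\widetilde V(s)\|_{L^1}$ exponentially small, all the $V$-carrying terms (including the quadratic ones and the coupling $B_{12}(\xi)\widetilde V$) have exponentially decaying $L^1$-norm, as does the phase term (since $\dot q$ decays exponentially and $\partial_\xi U_*\in L^1$). The corresponding contributions to $\widetilde U$ are then bounded by $\int_0^t e^{-\beta s}\,\rd s\le C$ in $L^1$ and by $\int_0^t (t-s)^{-1/2}e^{-\beta s}\,\rd s\le C\,t^{-1/2}$ in $L^\infty$ (split the integral at $t/2$), which respect the claimed rates.

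The main obstacle is the localized correction $(B_{11}(\xi)-A_1)\widetilde U$ in the region $\xi\to-\infty$ into which the instability is convected. There the weight $e^{\alpha\xi}$ gives no control and $\widetilde U$ only stays bounded, so after the split $(B_{11}-A_1)\widetilde U=(B_{11}-A_1)\widetilde U\,\mathbf{1}_{\{\xi>0\}}+(B_{11}-B_{11}^-)\widetilde U\,\mathbf{1}_{\{\xi<0\}}$ the first piece is harmless (the weighted decay of $\widetilde U$ makes its $L^1$-norm exponentially small) but the second has merely bounded $L^1$-norm, and feeding it through the constant-coefficient Duhamel formula produces a spuriously growing estimate. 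I would remove this by absorbing the coefficient into the propagator: since $B_{11}(\xi)-B_{11}^-$ decays exponentially as $\xi\to-\infty$, replacing $L^{(1)}$ by $L^{(1)}+\chi(\xi)\big(B_{11}(\xi)-B_{11}^-\big)$ with $\chi$ supported on $\{\xi<0\}$ perturbs the convection--diffusion generator by an exponentially localized zero-order term. Proving that the perturbed parabolic propagator still obeys the Gaussian bounds $\|\cdot\|_{L^1\to L^1}\le C$ and $\|\cdot\|_{L^1\to L^\infty}\le C\,t^{-1/2}$---equivalently, that this localized potential creates no spectrum in $\{\mathrm{Re}\,\lambda\ge 0\}$ beyond the translational zero eigenvalue already carried by $q(t)$---is the technical heart of the argument, and can be done by Aronson-type heat-kernel estimates or a Neumann series off the explicit convection--diffusion kernel. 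With this propagator the forcing consists only of exponentially $L^1$-small terms, and the two integral bounds above close a continuous-induction argument on $\sup_{0\le s\le t}\big(\|\widetilde U(s)\|_{L^1}+(1+s)^{1/2}\|\widetilde U(s)\|_{L^\infty}\big)$, yielding the asserted $L^1$-boundedness and $t^{-1/2}$ decay of $\widetilde U$ in $L^\infty$.
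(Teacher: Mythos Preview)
The paper does not itself prove Theorem~\ref{th:nl2}: together with Theorem~\ref{th:nl1} it is stated in the appendix as a result quoted from \cite{gls10}, and is then invoked to obtain Theorem~\ref{th:stab2} for the specific combustion system. There is therefore no in-paper proof to compare your proposal against.

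On its own terms, your outline is the natural one and correctly isolates the only delicate term: Duhamel for $\widetilde U$ with the constant-coefficient parabolic propagator $e^{tL^{(1)}}$, the structural hypothesis $R_1(U,0)=A_1U$ forcing every inhomogeneity beyond $A_1\widetilde U$ to carry a factor of $V_*$, $\widetilde V$, or $\dot q$, and the heat-kernel bounds $\|e^{tL^{(1)}}\|_{L^1\to L^1}\le C$, $\|e^{tL^{(1)}}\|_{L^1\to L^\infty}\le Ct^{-1/2}$ to close a continuation argument. The piece of $(B_{11}-A_1)\widetilde U$ on $\{\xi<0\}$, where the weight gives no decay of $\widetilde U$ and only the exponential smallness of $B_{11}-B_{11}^-$ localizes it, is indeed the obstruction to a naive bootstrap.

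The gap is in your fix. Absorbing $\chi_{\{\xi<0\}}(B_{11}-A_1)$ into the generator and asserting that the perturbed propagator still obeys the \emph{uniform-in-time} bounds $\|\cdot\|_{L^1\to L^1}\le C$ and $\|\cdot\|_{L^1\to L^\infty}\le Ct^{-1/2}$ is not delivered by the methods you name. Aronson-type bounds for a parabolic operator with a bounded zero-order potential generically carry an $e^{Ct}$ prefactor, which destroys the long-time $t^{-1/2}$ rate; and a Neumann series off the free kernel yields, already at first order, a contribution of size $\int_0^t\|e^{(t-s)L^{(1)}}\|_{L^1\to L^\infty}\|W\|_{L^1}\,\rd s\sim t^{1/2}$, so it does not sum uniformly in $t$. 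You then say the matter reduces to showing the perturbed scalar operator has no spectrum in $\{\mathrm{Re}\,\lambda\ge 0\}$ ``beyond the translational zero eigenvalue already carried by $q(t)$,'' but the translation mode $Y_*'$ is an eigenfunction of the full coupled operator $\mathcal L$, not of the ad hoc scalar operator $L^{(1)}+\chi_{\{\xi<0\}}(B_{11}-A_1)$; nothing in Hypotheses~\ref{hypo1}--\ref{hypo2} constrains the $L^1$ or $L^\infty$ spectrum of the latter. Without an independent argument for those long-time propagator bounds, the induction on $\sup_{0\le s\le t}\big(\|\widetilde U(s)\|_{L^1}+(1+s)^{1/2}\|\widetilde U(s)\|_{L^\infty}\big)$ does not close.
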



\end{document}